\documentclass{article}

\parskip=8pt
\parindent=0pt

\IfFileExists{MinionPro.sty}
{\usepackage[lf]{MinionPro}
\usepackage{amsmath,amsthm,amsbsy}}
{\usepackage{times}
\usepackage{savesym} 
\usepackage{amsmath,amsthm,amsbsy}
\savesymbol{iint}
\savesymbol{openbox}
\usepackage{txfonts}
\restoresymbol{TXF}{iint}
\restoresymbol{TXF}{openbox}}

\usepackage{mathtools}
\usepackage{geometry}
\usepackage{color}
\definecolor{labelkey}{rgb}{0,0,.75}
\definecolor{MyGreen}{rgb}{0,.6,.2}
\definecolor{MyDarkBlue}{rgb}{.1,.1,.75}
\usepackage[colorlinks=true, citecolor=MyGreen, linkcolor=MyDarkBlue]{hyperref}
\usepackage{tensor}
\usepackage[all,cmtip]{xy}
\usepackage{graphicx}

\DeclareMathOperator{\ck}{\bf L}
\renewcommand{\div}{\mathop{\rm div}\nolimits}

\DeclareMathOperator{\Lap}{\Delta}

\DeclareMathOperator{\tr}{\rm tr}

\DeclareMathOperator{\extd}{\mathbf d}

\DeclareMathOperator{\Ker}{\mathrm{Ker}}
\DeclareMathOperator{\grad}{\mathrm{grad}}
\DeclareMathOperator{\Drift}{\mathrm{Drift}}

\def\ip<#1,#2>{\left<#1,#2\right>}
\let\ol\overline

\newcommand{\ra}{\rightarrow}

\newcommand{\Reals}{\mathbb{R}}
\newcommand{\Nats}{\mathbb{N}}

\newcommand{\calC}{\mathcal{C}}

\newcommand{\calE}{\mathcal{E}}
\newcommand{\calF}{\mathcal{F}}

\newcommand{\calJ}{\mathcal{J}}
\newcommand{\calK}{\mathcal{K}}

\newcommand{\calO}{\mathcal{O}}
\newcommand{\calP}{\mathcal{P}}
\newcommand{\calQ}{\mathcal{Q}}

\newcommand{\calU}{\mathcal{U}}

\newcommand{\olg}{\overline{g}}

\newcommand{\Stt}{S_{\mathrm{tt}}}
\newcommand{\RR}{\mathbb R}

\newcommand{\bfsigma}{\boldsymbol{\sigma}}
\def\dc<#1,#2,#3>{\{#1;\;#2,#3\}}

\usepackage{etoolbox}
\newcommand\drift[2][]{%
  \ifstrempty{#1}{%
    [#2]^\mathrm{drift}
  }{%
    [#2]^\mathrm{drift}_{#1}
  }%
}


  \newcounter{mnote}
  \setcounter{mnote}{0}
  
  \let\oldmarginpar\marginpar
  \renewcommand\marginpar[1]{\-\oldmarginpar[\raggedleft\scriptsize #1]%
  {\raggedright\scriptsize #1}}

\title{Conformal Fields and the Structure of the Space of Solutions of the Einstein Constraint Equations}

\date{}

\author{Michael Holst \\ UC San Diego \and David Maxwell \\ University of Alaska \and Rafe Mazzeo \\ Stanford University}

\begin{document}
\newtheorem{theorem}{Theorem}[section]
\newtheorem{conjecture}[theorem]{Conjecture}
\newtheorem{problem}[theorem]{Problem}
\newtheorem{proposition}[theorem]{Proposition}
\newtheorem{corollary}[theorem]{Corollary}
\newtheorem{remark}[theorem]{Remark}
\newtheorem{lemma}[theorem]{Lemma}
\theoremstyle{definition}
\newtheorem{definition}[theorem]{Definition}
\numberwithin{equation}{section}

\maketitle

\begin{abstract}
The drift method, introduced in \cite{Maxwell:2014Drift}, provides a new formulation of the Einstein constraint equations, 
either in vacuum or with matter fields. The natural of the geometry underlying this method compensates for its slightly 
greater analytic complexity over, say, the conformal or conformal thin sandwich methods. We review this theory here
and apply it to the study of solutions of the constraint equations with non-constant mean curvature. We show that this
method reproduces previously known existence results obtained by other methods, and does better in one important
regard. Namely, it can be applied even when the underlying metric admits conformal Killing (but not true Killing) vector fields.
We also prove that the absence of true Killing fields holds generically. 
\end{abstract}

\section{Introduction}\label{sec:intro}
Let $(M, g, K)$ denote a triplet consisting of an $n$-dimensional manifold $M$, a metric $g$ on $M$, and 
an auxiliary symmetric $2$-tensor $K$. The vacuum Einstein constraint equations for this triplet are
\begin{subequations}\label{eq:constraints}
\begin{alignat}{2}
R_{ g} - | K|_{ g}^2 + (\tr_{ g}  K)^2 &= 0 &\qquad&\text{\small[Hamiltonian constraint]}\label{hamvac}\\
\div_{ g}  K- \extd (\tr_{g} K) &= 0. &\qquad&\text{\small[momentum constraint]}\label{momvac}
\end{alignat}
\end{subequations}
We typically assume that $M$ is compact, or at least that $(M,g)$ is complete. Solutions correspond to space-like hypersurfaces 
in a Lorentzian spacetime $(X,G)$, i.e., solutions of the vacuum Einstein equations $\mathrm{Ric}(G) = 0$, so $g$ is the 
induced metric and $K$ the second fundamental form of this hypersurface. Solutions to system \eqref{eq:constraints} serve 
as Cauchy data for the Einstein evolution problem (which of course must be supplemented by some choice of gauge to make
the problem hyperbolic). The interest in finding solutions of the constraint equations is directly tied in this way to the 
study of the general Einstein equations. More general versions of these equations include a cosmological constant and source 
terms, and will be recalled below.

The set of pairs $(g,K)$ which solve (\ref{hamvac}, \ref{momvac}) is infinite dimensional, and in a suitable
topology constitutes a Banach manifold (at least away from the solutions for which the linearized operator has cokernel).
To turn the search for these solutions into a less underdetermined and
hence more tractable problem, it is customary to decompose the space of all pairs $(g,K)$ into `slices' and consider the 
constraint equations as an equation 
within each slice. If done correctly, this leads to a family of semilinear elliptic equations, one for each slice, 
to which one can apply a vast panoply of known techniques.  The traditional slicing is known as the
conformal method, originally proposed by Lichnerowicz and Choquet-Bruhat, and studied by them and many others over the 
past 60 years.  
Another common method appearing in the intervening years is called the conformal thin sandwich method. Although apparently different, it
was proved by the second author \cite{Maxwell:2014gx} that this is completely equivalent to the older conformal method.

In the conformal method, the data for the slices consist
of triplets $(g, \tau, \sigma; N)$ where $g$ dictates the conformal class $[g]$ of the solution metric $\overline g$, $\tau$ is the `mean curvature function', i.e., $\tau=\tr_{\overline g} K$ for
the eventual solution, $\sigma$ is a transverse-traceless 
(i.e. trace-free and divergence-free) tensor with respect to $g$,
and $N$ is a positive function that plays the role of a gauge choice
and is related to the so-called lapse associated with a coordinate system
on the spacetime generated by the solution of the constraint equations.
A comprehensive description of solutions to the conformal method is known in the special case when $\tau$ is constant \cite{Isenberg:1995bi},
and this led to perturbative results shortly thereafter 
\cite{IsenbergMoncrief1996}. Significant breakthroughs were obtained by the first and
later the second authors \cite{HNT07a,M09} concerning existence for `far-from-CMC' data, where the mean-curvature function is
allowed to be variable and seemingly nowhere close to constant, with a price of requiring the transverse-traceless tensor to be
very small. This led to several new developments, and extensions and refinements of these ideas in various other standard settings.
It was pointed out recently, however, by Gicquaud and his collaborators \cite{Gicquaud:2014bu} that upon recasting the setup in certain
way, all of these results are still fundamentally perturbative and hence should be regarded as `near-CMC'. 

In recent years limitations of the conformal method in the far-from-CMC setting have appeared. We point to 
\cite{Maxwell:2011if} \cite{Maxwell:2014Kasner} along with the very nice results in \cite{Nguyen:2015}
(based on the original blowup analysis of \cite{Dahl:2012dk}) for examples where there exist either no or 
multiple solutions of the constraint equations corresponding to a given set of conformal data $(g,\sigma,\tau;N)$, and
there is nothing apparent in the geometry of this data set which allows one to a priori predict what happens.  Motivated by
these difficulties, the second author here proposed \cite{Maxwell:2014Drift} a different idea to slice up the space of pairs 
$(g,K)$. This is known as the {\it drift} method, and is based on an invariant geometric interpretation of the dynamics of 
spacelike hypersurfaces evolving in a Lorentzian Einstein manifold. We review these methods carefully
below. For now let us note one key difference. In the drift method, the mean curvature function $\tau$ is replaced by a 
pair $(\tau_*, V)$, where $\tau_*$ is a certain average $V$ is a vector field which represents a `drift' equivalence class.
The equations in this formulation are more nonlinear and more complicated than for the older methods, but the key motivation 
is that this new framework should make it easier to handle various well-known obstructions and subtleties in the conformal method.
More specifically, it is not clear how to make the conformal method work when the conformal class $[g]$ admits conformal Killing 
fields, and indeed, we show here that there is a fundamental breakdown in that procedure. That method is also less
tractable when $\tau$ has zeros.  In fact, there are no general a priori estimates for solutions of these equations, and there are 
examples of families of solutions which blow up. The hope remains that better methods may predict the data sets near which a priori 
estimates fail. 

The goal of the present paper is to show that drift method does at least as well as the conformal method, and in a certain sense, much
better. More specifically, we prove a set of existence results for the drift formulation of the constraint equations, both without and
with source terms, which include the far-from-CMC results cited above. All of this is done perturbatively around the CMC case.
The major improvement is that these results also hold when the conformal class $[g]$ admits conformal Killing fields, so long as
the the metric we are perturbing from has no Killing fields. 

This paper is organized as follows. We begin by reviewing the standard conformal method and introducing the notion of
conformal momentum, and then describe the precise way by which conformal Killing fields present an obstruction 
in the conformal method. We finally present the drift method in \S 4, and in \S 5 the adaptations necessary to incorporate 
matter fields.  Section 6 then proves the existence of near-CMC solutions using the drift method, and also establishes
that the hypotheses needed to apply this theorem hold generically.


\subsection{Notation and Conventions}\label{sec:notation}

In this paper we assume that $M$ is a manifold of dimension $n\ge 3$. 
We assume $M$ is compact, and occasionally do not say this explicitly in the statements of results, etc. 
Solutions to various equations are found in 
a Sobolev space $W^{k,p}$, where $k \in \Nats$, $k \geq 2$, and $p>1$ are chosen so that
\[
\frac{1}{p}-\frac{k-1}{n} < 0;
\]
this ensures that $W^{k,p}$ functions have H\"older continuous first derivatives. 
If $E$ is any smooth vector bundle over $M$, we write $W^{k,p}(M,E)$ for the space of sections of $E$
which are in $W^{k,p}$ with respect to any local trivialization. In particular, we have the bundles $TM$ of vector fields, 
$T^*M$ of covector fields, $S_2M$ of symmetric $(0,2)$ tensors and its subbundle $S_{\rm tt}(g)$ 
of transverse-traceless tensors with respect to the metric $g$.  Function spaces of positive
functions are denoted by a subscript $+$, e.g. $W^{k,p}_+(M)$.

We henceforth set the constants
\[
q = \frac{2n}{n-2}, \qquad \kappa = \frac{n-1}{n}, \qquad a = 2\kappa q,
\]
so $q$ is a critical Sobolev exponent and $\kappa$ and $a$ are dimensional constants which appear
in various equations below. 

We also consider the conformal Killing operator, whose action on vector fields is
\[
(\ck X)_{ab} = \nabla_a X_b + \nabla_b X_a - \frac{2}{n} \div X\, g_{ab}.
\]
Its adjoint $\ck^*$ acts on symmetric, trace-free $(0,2)$ tensor $A_{ab}$ by
\[
(\ck^* A)_b = -2\nabla^a A_{ab}
\]

The kernel of $\ck$ is the finite dimensional space $\mathcal Q$ of conformal Killing fields.


\section{The Standard Conformal Method and Conformal Momentum}\label{sec:confmeth}

The conformal method appears in the literature in two forms. 
The original conformal method was introduced by Lichnerowicz \cite{Lichnerowicz:1944}
and substantially extended by York, O'Murchadha, and Choquet-Bruhat among others in the 1970s.  
Some decades later York introduced the conformal thin-sandwich method \cite{YorkJr:1999jo}, and later,
with Pfeiffer, also gave an equivalent Hamiltonian formulation \cite{Pfeiffer:2003ka}.
It turns out that the original conformal method and the conformal thin-sandwich method
are really the same parameterization of the constraint equations \cite{Maxwell:2014gx}; we describe them 
here in a unified fashion that will also be helpful for describing the drift formulations of the constraint 
equations. For simplicity, we focus for now on the vacuum constraint equations; Section \ref{sec:matter}
below describes an approach for incorporating matter fields into both the standard conformal method and the drift formulation.

A metric and second fundamental form $(g,K)$ canonically determine 
\begin{itemize}
\item a conformal class $[g]$, and
\item a mean curvature $\tau = g^{ab}K_{ab}$.
\end{itemize}
These are two of the parameters of the conformal method. The third and final parameter is not completely
canonical and depends on a choice of volume form $\alpha$.  We will call $\alpha$ a \textit{volume gauge}. 
Once this has been fixed, the final parameter is 
\begin{itemize}
\item the conformal momentum of $(g,K)$ measured by $\alpha$,
\end{itemize}
which we define in Definition \ref{def:confmommeas} below.  We refer to \cite{Maxwell:2014gx} and 
\cite{Maxwell:2014Drift} for the geometric and physical motivation behind this terminology.
\begin{definition}[Conformal Momentum]\label{def:confmom}
A \textit{conformal momentum} is an equivalence class of pairs $(g,\sigma)$ where 
$g$ is a metric, $\sigma$ is transverse traceless with respect to $g$ (i.e., $\sigma$ is trace-free 
and divergence-free) and where we identify pairs 
\begin{equation}
(g,\sigma) \sim (\phi^{q-2}g, \phi^{-2}\sigma)
\end{equation}
for any conformal factor $\phi \in W^{k,p}_+(M)$.
\end{definition}
To complete the description of the measurement of conformal momentum, we first recall a variation of 
York splitting \cite{York:1973fl}. 
\begin{lemma}\label{lem:yorksplit}
Suppose that $A\in W^{k-1,p}(M,S_2 M)$ be trace-free and fix any $N\in W^{k,p}_+(M)$.  Then there is a unique transverse-traceless 
$\sigma\in W^{k-1,p}(M,S_2 M)$ and a vector field $W\in W^{k,p}(M,TM)$ such that
\begin{equation}
A = \sigma + \frac{1}{2N} \ck W.
\end{equation}
This formulation uses that $M$ is compact. The vector field $W$ here is uniquely determined up to addition with a 
conformal Killing field.
\end{lemma}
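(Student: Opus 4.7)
The plan is to realize the decomposition as the orthogonal splitting associated to the second-order operator $L_N W := \ck^{*}\!\bigl((2N)^{-1}\ck W\bigr)$ acting on vector fields. Since $N > 0$ is in $W^{k,p}_+$ and $\ck^{*}\ck$ is elliptic on a compact manifold, $L_N$ is a formally self-adjoint elliptic operator of order two. Pairing $L_N W$ against $W$ in $L^2$ and integrating by parts gives $\int_M (2N)^{-1}\lvert \ck W\rvert_g^2 \, dV$, which vanishes iff $\ck W = 0$; hence $\ker L_N = \mathcal{Q}$, the finite-dimensional space of conformal Killing fields.

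The heart of the argument is then to solve $L_N W = \ck^{*} A$. By Fredholm theory for formally self-adjoint elliptic operators on compact manifolds, this equation is solvable in $W^{k,p}(M,TM)$ if and only if the right-hand side is $L^2$-orthogonal to $\ker L_N = \mathcal{Q}$. For any $Y \in \mathcal{Q}$, integration by parts gives $\int_M \langle \ck^{*} A, Y\rangle\, dV = \int_M \langle A, \ck Y\rangle\, dV = 0$, so the compatibility condition is automatic. A solution $W \in W^{k,p}(M,TM)$ therefore exists and is unique up to addition of an element of $\mathcal{Q}$. Setting $\sigma := A - (2N)^{-1}\ck W$, one sees directly that $\sigma$ is trace-free (both $A$ and $\ck W$ are) and that $\ck^{*}\sigma = \ck^{*} A - L_N W = 0$, i.e., $\sigma$ is divergence-free.

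Uniqueness of $\sigma$, and of $W$ modulo $\mathcal{Q}$, follows from the same pairing: if $A = \sigma_1 + (2N)^{-1}\ck W_1 = \sigma_2 + (2N)^{-1}\ck W_2$, then the difference $\sigma_1 - \sigma_2 = (2N)^{-1}\ck(W_2 - W_1)$ is simultaneously transverse-traceless and in the image of $\ck$; applying $\ck^{*}$, pairing with $W_2 - W_1$ and integrating yields $\int_M (2N)^{-1}\lvert \ck(W_2 - W_1)\rvert_g^2 \, dV = 0$, so $W_2 - W_1 \in \mathcal{Q}$ and $\sigma_1 = \sigma_2$.

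The main technical point, modest as it is, is the identification $\ker L_N = \mathcal{Q}$ together with the elliptic regularity estimate. The latter requires care with the regularity of the coefficient $(2N)^{-1}$: given $A \in W^{k-1,p}$ and $N \in W^{k,p}_+$ (which is an algebra under the standing Sobolev assumption on $k$ and $p$), the operator $L_N$ is a legitimate map $W^{k,p}(M,TM) \to W^{k-2,p}(M,TM)$, and a weak solution $W \in W^{1,2}$ produced by the cokernel-avoiding argument may be bootstrapped to $W^{k,p}$ by standard elliptic regularity for $L_N$, giving $\sigma = A - (2N)^{-1}\ck W \in W^{k-1,p}$ as required.
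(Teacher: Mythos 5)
Your argument is correct and is exactly the direct route the paper indicates: solve $\ck^*\frac{1}{2N}\ck W = \ck^* A$ using the Fredholm/isomorphism property of the operator in \eqref{lem:divck}, with the compatibility condition automatic since $\ck^* A \perp \mathcal{Q}$. The identification of the kernel, the uniqueness argument, and the regularity bookkeeping all match what the paper relies on, so there is nothing to add.
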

The special case $N\equiv 1/2$ is more commonly known as York splitting, but the result for arbitrary $N$ is a consequence 
of the $N\equiv 1/2$ case \cite{Maxwell:2014gx}, or alternatively, can be proved directly by applying Lemma \ref{lem:divck} 
below to solve
\begin{equation}
\ck^* \frac{1}{2N} \ck W = \ck^* A.
\end{equation}
\begin{definition}[Measurement of Conformal Momentum]\label{def:confmommeas}
Suppose that $\alpha$ is a fixed $W^{k,p}$ volume form on $M$.  The conformal momentum of $(g,K)$ measured by $\alpha$,
denoted $[g,K]_\alpha$, is the equivalence class of the pair $(g,\sigma)$, where $\sigma$ is computed as follows. 
Write $K = A +\frac{\tau}{n} g$ where $A$ is trace-free, and let $N = dV_g / \alpha$; then apply York splitting to decompose
\begin{equation}
A = \sigma + \frac{1}{2N} \ck W.
\end{equation}
\end{definition}

Briefly, the aim of the conformal method is to use the conformal class, conformal momentum measured by $\alpha$, and mean curvature 
as the `seed data' for solutions of the constraint equations. Fixing $\alpha$, we prescribe a conformal class $\mathbf{g}$, a 
conformal momentum $\bfsigma$, and a mean curvature $\tau$, and seek a solution $(\ol g, \ol K)$ of the vacuum constraints with
\begin{equation}\label{eq:confmethod}
[\ol g] = \mathbf g, \ \ [\ol g,\ol K]_\alpha = \bfsigma,\ \ \ol g^{ab} \ol K_{ab} 
= \tau.
\end{equation}
To cast this as a PDE, pick an arbitrary representative $g$ of $\mathbf g$, let $\sigma$ be the unique $g$-transverse-traceless 
tensor such that $(g,\sigma)$ is a representative of $\bfsigma$, and define the lapse $N=dV_g /\alpha$.  We call $(g,\sigma,\tau;\;N)$
a \textit{conformal data set}, with the lapse segregated from the other terms
to reflect its role as a gauge choice.  Starting from a conformal data set we seek a conformal factor $\phi$ and a vector field $W$ 
solving 
\begin{subequations}
\label{eq:CTS-Hvac}
\begin{alignat}{2}
-a\Delta \phi +R\phi- \left| \sigma + \frac{1}{2N}\ck W\right|^2 \phi^{-q-1} + \kappa \tau^2 \phi^{q-1}  &= 0 
&\qquad&\text{\small[CTS-H Hamiltonian constraint]}\\
\frac{1}{2}\ck^* \left[ \frac{1}{2N} \ck W \right] +\kappa \phi^{q} \extd\tau &=   0 
&\qquad&\text{\small[CTS-H momentum constraint]}
\end{alignat}
\end{subequations}
which we call the conformal thin-sandwich equations in their Hamiltonian formulation (the CTS-H equations). If $(\phi,W)$ solves these equations
then the pair 
\begin{equation}\label{eq:ctshsol}
\ol g = \phi^{q-2} g, \ \ \ol K = \phi^{-2}\left(\sigma+\frac{1}{2N}\ck W\right) + \frac{\tau}{n} \ol g
\end{equation}
solves \eqref{eq:confmethod}, and all solutions of problem \eqref{eq:confmethod} are obtained this way.

We observe that \eqref{eq:confmethod} is intrinsically conformally covariant, and hence the CTS-H equations must also be. Concretely, 
the solutions determined by $(g,\sigma,\tau;\;N)$ and
\begin{equation}\label{eq:confdatasettransform}
(\hat g,\hat\sigma,\hat\tau;\;\hat N)=(\psi^{q-2}g,\psi^{-2}\sigma,\tau;\;\psi^q N)
\end{equation}
are the same. 
In other words, we are expressing the same problem \eqref{eq:confmethod} using two different, but conformally related, 
sets of data. The standard conformal method corresponds to using the conformal representative of $\mathbf g$ 
with volume form $dV_g= \alpha/2$, so that $N\equiv1/2$ in \eqref{eq:CTS-Hvac}; we are thus 
restricting ourselves to an inflexible choice for the background metric to represent the problem, whereas if we allow an 
arbitrary background metric in the conformal class, we must introduce the lapse function $N$ into \eqref{eq:CTS-Hvac}, which
then gives 
the Hamiltonian conformal thin-sandwich method of \cite{Pfeiffer:2003ka}.

A conformal data set $(g,\sigma,\tau;\;N)$ determines the volume gauge $\alpha$ by $N = dV_g/\alpha$.  Thus, fixing
the background metric, the choice of lapse is equivalent to the choice of a volume gauge. It is important to note that 
the lapse transforms conformally by $\hat N = \psi^q N$, cf.\ \cite{YorkJr:1999jo}; we say that  
the conformal method involves a \textit{densitized lapse}.  On the other hand, the volume gauge $\alpha$ is a fixed object, and 
applies to all representatives of a conformal class. 

\section{Conformal Killing Fields and the Conformal Method}

Suppose $(\olg, K)$ is a vacuum initial data set and that $\olg$ admits a conformal Killing field $Q$.
The momentum constraint implies
\begin{equation}\label{eq:momckf}
-\ol \nabla^a(K_{ab}-\tau\;\olg_{ab}) = 0
\end{equation}
where, as usual, $\tau = \olg^{ab} K_{ab}$. Multiplying equation \eqref{eq:momckf} by the conformal Killing field $Q$,
integrating by parts, and using the conformal Killing equation
\begin{equation}
\ol \nabla_a Q_b + \ol \nabla_b Q_a = \frac{2}{n} \ol \nabla_c Q^c g_{ab},
\end{equation}
we find
\begin{equation}
\begin{aligned}
0 &= \int_M \left[ K_{ab}-\tau\;\olg_{ab}\right] \frac{1}{2}(\ol \nabla^a Q^b + \ol \nabla^b Q^a)\;dV_{\olg}\\
& = \int_M \left[ K_{ab}-\tau\;\olg_{ab}\right] \frac{1}{n} (\ol \nabla_c Q^c) \olg^{ab}\;dV_{\olg}\\
& = \frac{1-n}{n}\; \int_M  \tau\; \ol \nabla_c Q^c\;dV_{\olg}.
\end{aligned}
\end{equation}
Integrating by parts one more time gives the {\textit CKF compatibility condition}
\begin{equation}\label{eq:ckfcompat}
\int_M Q(\tau)\;dV_{\olg} = 0
\end{equation}
between mean curvature and conformal Killing fields.  

Suppose $(g,\sigma,\tau;\; N)$ is a CTS-H conformal data set where $g$ admits a conformal Killing field $Q$, and let
$(\phi,W)$ be a solution of the corresponding CTS-H equations. The CKF compatibility condition \eqref{eq:ckfcompat} then becomes
\begin{equation}\label{eq:ckfcompat-conf}
\int_M Q(\tau)\; \phi^q\; dV_{g} =0.
\end{equation}
Since \eqref{eq:ckfcompat-conf} involves the unknown $\phi$, it is not obvious whether the CKF compatibility condition imposes 
a genuine restriction on allowable conformal data sets; conceivably, the conformal method might always manage to find a 
conformal factor $\phi$ satisfying \eqref{eq:ckfcompat-conf}, regardless of the choice of $\tau$.
Nevertheless, equation \eqref{eq:ckfcompat-conf} presents an obstacle in current solution techniques for the CTS-H equations. Typically
one generates a sequence $(\phi_{(n)},W_{(n)})$ of approximate solutions iteratively; each iteration involves solving a variation of 
the momentum constraint such as
\begin{equation}\label{eq:momit}
\nabla_b  \left[\frac{1}{2N} (\ck W_{(n+1)})^{ab}\right] = \frac{n-1}{n} (\phi_{(n)})^{q}\; \nabla^a \tau.
\end{equation}
Equation \eqref{eq:momit} is solvable for $W_{(n+1)}$ if and only if
\begin{equation}\label{eq:ckfcompat-conf-it}
\int_M Q(\tau)\; \phi^q_{(n)}\; dV_{g} =0
\end{equation}
for all conformal Killing fields $Q$. Any standard method does not ensure that the successive functions $\phi_{(n)}$ still satisfy 
\eqref{eq:ckfcompat-conf-it}, so it may not be possible to continue the iteration procedure. 

Nevertheless, for certain conformal data sets, conformal Killing fields are not an obstruction to solving the CTS-H equations.  
Most importantly, if $\tau$ is constant then $Q(\tau)\equiv 0$ for any conformal Killing field, and condition \eqref{eq:ckfcompat-conf} 
is satisfied trivially for every conformal factor.  In other words, the presence of conformal Killing fields plays no role in the CMC 
theory as described in, for example, \cite{Isenberg:1995bi}. A minor generalization is that if $\tau$ is constant on the integral curves 
of every conformal Killing field $Q$, then we still have that $Q(\tau)\equiv 0$, hence \eqref{eq:ckfcompat-conf} is satisfied
trivially regardless of the conformal factor. This observation was exploited in \cite{ChoquetBruhat:1992fz} to construct near-CMC solutions 
under this hypothesis on the mean curvature function. This is a strong hypothesis, of course (and amounts to assuming that $\tau$
is constant for metrics conformal to the flat torus or the round sphere).  Moreover, this hypothesis is not necessary:
\cite{Maxwell:2011if} and \cite{Maxwell:2014Kasner} contain examples of non-CMC conformal data sets where the 
background metric is a flat torus (hence not covered by \cite{ChoquetBruhat:1992fz}) and where there exist solutions. The current
theory for the CTS-H equations does not exclude the possibility that conformal Killing fields are irrelevant to solvability.

We now give a simple example which shows that at least in certain situations, the existence theory is sensitive to the presence 
of conformal Killing fields. The argument stems from the observation that \eqref{eq:ckfcompat} is analogous to the Pohozaev constraint
\begin{equation}\label{eq:ckfcompat-R}
\int_M Q(R)\; dV_{g}=0,
\end{equation}
which relates the scalar curvature function $R$ and conformal Killing fields $Q$ on a compact manifold \cite{Bourguignon:1987ge}.
Its proof is a straightforward adaptation of ideas from \cite{KW:1974} and \cite{Bourguignon:1987ge} concerning obstructions to 
the existence of solutions for the Nirenberg problem of finding metrics in a conformal class with prescribed scalar curvature. 

\begin{proposition}\label{prop:CKFs-matter-sometimes}
Let $g$ be the round metric on the sphere $S^n$, $\sigma\not \equiv 0$ a smooth transverse traceless tensor, and $\tau_0$ a constant.
There exists a smooth function $T$ such that for every $\epsilon\in\Reals$, the conformal data set 
$(g,\sigma,\tau_0+\epsilon T; N)$ admits a solution of the vacuum CTS-H equations if and only if $\epsilon=0$.
\end{proposition}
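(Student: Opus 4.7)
The plan is to use the CKF compatibility condition \eqref{eq:ckfcompat-conf} as a rigid Kazdan--Warner-type obstruction: pick $T$ so that the condition forces $\epsilon=0$, while handling the existence at $\epsilon=0$ by invoking the standard CMC Lichnerowicz theory on the round sphere.

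For nonexistence when $\epsilon\neq 0$, embed $S^n\subset\Reals^{n+1}$, let $h=x^{n+1}|_{S^n}$, and set $Q=\grad h$ and $T=h$. The vector field $Q$ is a genuine conformal Killing field of the round sphere with nontrivial divergence---indeed $\Lap h=-nh$, so $\div Q=-nh\not\equiv 0$---whence $Q\in\mathcal Q\setminus\Lie(g)$. The crucial property is that
\[
Q(T)=|\grad h|^2
\]
is nonnegative on $S^n$ and vanishes only at the two poles. Given any putative solution $(\phi,W)$ of the CTS-H equations with data $(g,\sigma,\tau_0+\epsilon T;\;N)$, applying \eqref{eq:ckfcompat-conf} with this $Q$ yields
\[
0=\int_{S^n}Q(\tau_0+\epsilon T)\,\phi^q\,dV_{g}=\epsilon\int_{S^n}|\grad h|^2\,\phi^q\,dV_{g}.
\]
Since $\phi>0$ and $|\grad h|^2$ is positive off a measure-zero set, the right-hand integral is strictly positive, forcing $\epsilon=0$.

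For existence at $\epsilon=0$, $\tau=\tau_0$ is constant, so $\extd\tau\equiv 0$ and the momentum equation of \eqref{eq:CTS-Hvac} is solved by $W\equiv 0$. The Hamiltonian constraint then reduces to the standard Lichnerowicz equation
\[
-a\Lap\phi+R\phi+\kappa\tau_0^2\phi^{q-1}=|\sigma|^2\phi^{-q-1}
\]
on the Yamabe-positive manifold $(S^n,g)$. Provided $\tau_0\neq 0$ (a hypothesis I expect is intended as implicit, since otherwise solvability requires $\sigma$ to be nowhere zero), the CMC classification of \cite{Isenberg:1995bi} yields a positive solution $\phi$ because $\sigma\not\equiv 0$; then $(\phi,0)$ together with \eqref{eq:ctshsol} provides the required initial data.

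The one substantive point is the rigidity of the obstruction: because \eqref{eq:ckfcompat-conf} involves the unknown $\phi$, one must arrange that $Q(T)$ be sign-definite---not merely not identically zero---so that the integral cannot be killed by any choice of positive $\phi$. The choice $T=h$ is the minimal and natural way to achieve this on $S^n$, exactly paralleling the device used to derive the Kazdan--Warner obstruction \eqref{eq:ckfcompat-R} for the prescribed-scalar-curvature problem on the sphere. Everything else is routine verification.
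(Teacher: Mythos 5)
Your proof is correct and follows essentially the same route as the paper's: both exploit the CKF compatibility condition \eqref{eq:ckfcompat-conf} with a gradient conformal Killing field $Q=\grad f$ and $T$ chosen so that $Q(T)=|\grad f|^2$ is sign-definite (the paper takes $T$ to be a function of the distance from a point, you take the height function $x^{n+1}$, which is arguably the cleaner choice), and both invoke the Yamabe-positive CMC theory for existence at $\epsilon=0$. The only quibble is that your hedge about $\tau_0\neq 0$ is unnecessary: the CMC classification of \cite{Isenberg:1995bi} gives a solution for Yamabe-positive data with $\sigma\not\equiv 0$ and any constant mean curvature, including $\tau_0=0$.
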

\begin{proof}
Fix $p\in S^n$ and let $T$ be the distance function from $p$, and $Q$ the conformal Killing field $\grad T$. Define $\tau_{\epsilon}=
\tau_0+\epsilon T$. Since $(S^n,g)$ is Yamabe positive and $\sigma\not\equiv 0$, the CMC case of existence theory for the conformal 
method implies there exists a solution of the CTS-H equations when $\epsilon = 0$. On the other hand, if $\epsilon \neq 0$, then 
$Q(\tau_\epsilon) = \epsilon Q(T)$ has a single sign (except at the antipodal points), and hence $\int_{S^n} Q(\tau) \phi^q\; dV_g \neq 0$ 
for any choice of conformal factor $\phi$. This violates the CKF compatibility condition \eqref{eq:ckfcompat-conf} for every possible 
conformal factor and hence there exists no solution of the CTS-H equations for this conformal data when $\epsilon \neq 0$.
The lack of smoothness of $T$ at the antipodal points is not relevant here since we could replace $T$ by a smooth nonnegative function 
of $\mbox{dist}(p, \cdot)$ which is smooth on $S^n$ and satisfies the same conclusion. 
\end{proof}

Proposition \ref{prop:CKFs-matter-sometimes} shows that there exist CMC solutions of the constraint equations such that,
replacing the mean curvature function by certain arbitrarily small perturbations of it in the conformal data set, 
then the CTS-H equations no longer have a solution. This means that the standard hypothesis in the near-CMC theory that 
the metric does not admit nontrivial conformal Killing fields cannot be dropped completely.  
It is not at all clear if there is some natural and easily apparent geometric condition that distinguishes when one
should expect there to exist solutions or not.  

We shall take an alternate course and give up on prescribing the mean curvature function specifically.  The drift method
described in the next section involves the prescription of different sets of data, and implicitly shows how to adjust 
the mean curvature to account for the CKF compatibility condition. 

\section{Drift Variations of the Conformal Method}
In this section we give a brisk description of the drift formulations of the conformal method \cite{Maxwell:2014Drift}. 
Before getting into details, we observe that the principal distinction between the drift and standard conformal methods 
is that while the conformal method prescribes the mean curvature $\tau$ of the solution directly, the drift techniques 
involve a decomposition 
\begin{equation}
\tau = \tau_* + \frac{1}{N} \div V
\end{equation}
where $\tau_* \in \mathbb R$, $N$ is a positive function (the same lapse appearing in the CTS-H equations) and $V$ is
a vector field.  The mean curvature determined by $\tau_*$ and $V$ changes as we move between representatives in
a given conformal class for several reasons. First, the divergence operator depends on the choice of representative. 
Second, the lapse transforms as a densitized lapse, as described at the end of Section \ref{sec:confmeth}.  
Finally, if the metric admits conformal Killing fields, we cannot prescribe $V$ directly, but must add a suitable 
conformal Killing field $Q$ that changes as we change the conformal class representative. So in general,
\begin{equation}\label{eq:taudecomp}
\tau = \tau_* + \frac{1}{N} \div (V+Q)
\end{equation}
where $Q$ is a conformal Killing field determined by $V$. In short, the actual mean curvature function determined by
the data in the drift formulations naturally adapts to the presence of conformal Killing fields. This allows one
to prove slightly more general results. 

As discussed next in Section \ref{sec:volmom}, the constant $\tau_*$ in equation \eqref{eq:taudecomp} represents a certain 
dynamical quantity called the volumetric momentum. To interpret the vector field $V$, we first note that the mean curvature is
unchanged by adding a divergence-free vector field to $V$.  Moreover, 
$V$ is prescribed only up to adjustment by a suitable conformal Killing field, so $V$ is an element of the space
of vector fields modulo both conformal Killing fields and divergence-free vector fields. This quotient space is the space
of so-called drifts. We discuss them further in Section \ref{sec:driftdef}, before giving the equations for the
drift formulations in Section \ref{sec:driftform}.

\subsection{Volumetric Momentum}\label{sec:volmom}
The first step toward the drift parameterization of the constraint equations involves the identification
of a parameter, volumetric momentum. This plays a role somewhat analogous to the transverse-traceless
tensor in the standard conformal method, and represents a cotangent vector to the one-dimensional 
space of volume forms modulo diffeomorphisms, so the volumetric momentum is just a number. It arises in the following
analog of York splitting.
\begin{lemma}\label{lem:yorksplitvol}
Let $\tau\in W^{k-1,p}(M)$ and let $N\in W^{k,p}_+(M)$.  There is a unique constant $\tau_*$
and a vector field $V\in W^{k,p}$ such that
\begin{equation}\label{eq:yorksplitvol}
\tau = \tau_* + \frac{1}{N} \div V.
\end{equation}
Moreover, $V$ is uniquely determined up to addition of a divergence-free vector field, and 
\begin{equation}
\tau_* = \frac{\int_M N \tau\; dV_g}{\int_M N\; dV_g}.
\end{equation}
\end{lemma}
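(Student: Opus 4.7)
The plan is to reduce the equation to a divergence problem by multiplying through by $N$, extract the formula for $\tau_*$ from the compatibility condition for solvability of a $\div V = f$ equation on the compact manifold $M$, and then obtain $V$ by elliptic theory applied to an auxiliary Poisson equation. The whole argument is essentially a weighted Hodge-type decomposition.

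\textbf{Step 1 (Necessity and formula for $\tau_*$).} I would first show that any solution of \eqref{eq:yorksplitvol} must satisfy the stated formula. Multiplying \eqref{eq:yorksplitvol} by $N\, dV_g$ and integrating over the compact manifold $M$ gives
\[
\int_M N\tau\, dV_g = \tau_* \int_M N\, dV_g + \int_M \div V\, dV_g.
\]
Because $M$ is compact and without boundary, $\int_M \div V\, dV_g = 0$ for any $W^{k,p}$ vector field $V$ (by the divergence theorem, using that $k\ge 2$ makes $V$ at least $C^1$ under the assumption $\frac{1}{p}-\frac{k-1}{n}<0$). Since $N>0$, the denominator $\int_M N\,dV_g$ is strictly positive, and we can solve for $\tau_*$, yielding exactly the displayed formula. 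In particular $\tau_*$ is uniquely determined.

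\textbf{Step 2 (Existence of $V$).} With $\tau_*$ defined by the formula above, set $f := N(\tau - \tau_*)$, which lies in $W^{k-1,p}(M)$ (multiplication by the $W^{k,p}$ function $N$ preserves this regularity under the standing Sobolev assumption) and, by construction, satisfies $\int_M f\, dV_g = 0$. I would then produce $V$ satisfying $\div V = f$ by solving the Poisson equation $\Delta_g u = f$ on the closed manifold $M$; this is solvable uniquely up to additive constants precisely because $f$ has mean zero, and standard elliptic regularity yields $u \in W^{k+1,p}$. Setting $V = \nabla u$ gives a vector field in $W^{k,p}(M,TM)$ with $\div V = f$, whence $\tau_* + \frac{1}{N}\div V = \tau_* + \frac{1}{N}\cdot N(\tau - \tau_*) = \tau$.

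\textbf{Step 3 (Uniqueness of $V$ modulo divergence-free fields).} If $V_1$ and $V_2$ both solve \eqref{eq:yorksplitvol} with the same (uniquely determined) $\tau_*$, then $\frac{1}{N}\div(V_1 - V_2) = 0$, i.e.\ $\div(V_1-V_2)=0$, so they differ by a divergence-free vector field. Conversely, adding any divergence-free vector field to a solution $V$ clearly produces another solution.

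The argument is almost entirely formal; the only slightly delicate point is the regularity bookkeeping, namely checking that the product $N\tau$ lies in $W^{k-1,p}$ and that elliptic regularity for $\Delta_g$ yields $\nabla u \in W^{k,p}$ from a $W^{k-1,p}$ right-hand side. Both are routine under the hypothesis $\frac{1}{p}-\frac{k-1}{n}<0$ imposed in Section~\ref{sec:notation}, so I do not anticipate a genuine obstacle.
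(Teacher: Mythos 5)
Your proof is correct and is essentially the standard argument: the paper does not reproduce a proof here but defers to \cite{Maxwell:2014Drift}, where the lemma is established by the same reduction --- integrate against $N\,dV_g$ to force the stated value of $\tau_*$, then solve the mean-zero Poisson problem $\Delta u = N(\tau-\tau_*)$ and take $V=\grad u$. Your regularity bookkeeping ($u\in W^{k+1,p}$, hence $V\in W^{k,p}$) is exactly the observation the paper makes in noting that the smooth-category proof carries over to $W^{k,p}$ data.
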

This is proved in \cite{Maxwell:2014Drift} when the data is smooth, but the same proof works for metrics and data 
with the regularity stated here. 

\begin{definition}[Measurement of Volumetric Momentum]\label{def:volmommeas}
Let $\alpha$ be a $W^{k,p}$ volume form on $M$.  The \textit{volumetric momentum} of $(g,K)$ measured by $\alpha$
is computed as follows.  First, let $N = dV_g / \alpha$ and define $\tau = g^{ab}K_{ab}$. By Lemma \ref{lem:yorksplitvol},
$\tau  = \tau_* + \frac{1}{N} \div V$
for a unique constant $\tau_*$.  The volumetric momentum of $(g,K)$ measured by $\alpha$ is 
\begin{equation}
[g,\tau]_\alpha = -2\kappa \tau_*, \ \ \kappa = (n-1)/n.
\end{equation}
\end{definition}

Volumetric momentum is already an interesting parameter in the standard conformal method.  Examples in 
\cite{Maxwell:2014Kasner} exhibit the development of certain one-parameter families of non-CMC solutions 
of the constraint equations generated by the standard conformal method, and $\tau_*=0$ is among the 
several necessary conditions needed to generate these families.  Curiously, $\tau_*=0$ is not easily detected 
from the usual conformal data; in effect, one must solve the equations of the conformal method to determine 
if $\tau_*$ vanishes or not. These examples motivate finding a parameterization in which $\tau_*$ is 
explicitly prescribed. 

\subsection{Drift}\label{sec:driftdef}
Fixing a volume gauge $\alpha$, the conformal momentum and volumetric momentum 
of $(g,K)$ measured by $\alpha$ drop out of the momentum constraint.
Indeed, using Lemmas \ref{lem:yorksplit} and \ref{lem:yorksplitvol} to decompose
\begin{equation}
K = \sigma + \frac{1}{2N} \ck W + \frac{1}{n}\left[\tau_* + \frac{1}{N}\div V\right] g,
\end{equation}
then the vacuum momentum constraint becomes
\begin{equation}\label{eq:drift}
-\frac{1}{2}\ck^* \frac{1}{2N} \ck W  = \kappa \extd \left[\frac{1}{N}\div V\right].
\end{equation}
The momentum equation in this formulation has interesting symmetries.
The vector fields $W$ and $V$ appearing in it each represent a certain geometric object,
coined a drift in \cite{Maxwell:2014Drift}.
\begin{definition}[Drift]\label{def:drift}
Let $g$ be a $W^{k,p}$ metric.  A \textit{drift} at $g$ is an element of
\begin{equation}
W^{k,p}(M,TM)/ (\Ker \ck_g + \Ker \div_g).
\end{equation}
We write $\drift[g]{W}$ for the drift at $g$ determined by the vector field $W$
and $\Drift_g$ for the space of drifts at $g$.
\end{definition}
\begin{remark} The spaces $\Ker \ck_g$ and $\Ker \div_g$ intersect in the space of Killing fields,
but since $\Ker \div_g$ is closed in $W^{k,p}(M,TM)$ and $\Ker \ck_g$ is finite dimensional, $\Ker \ck_g + \Ker \div_g$ 
is also a closed subspace and the quotient $\Drift_g$ inherits a Banach space topology. 
\end{remark}
As elaborated in \cite{Maxwell:2014Drift}, a drift represents an infinitesimal motion 
in the space of metrics, modulo diffeomorphisms, that preserves the conformal class up to 
diffeomorphism and the volume. Note that such a motion need not preserve the diffeomorphism class of 
the metric. 

Equation \eqref{eq:drift} represents a relationship between two drifts. To see this, suppose $\mathbf V$ is a drift 
at $g$, with $V$ any representative.  Equation \eqref{eq:drift} can be regarded as a PDE in $W$. If $g$ admits 
conformal Killing fields, there is no solution unless the right-hand side of \eqref{eq:drift} is orthogonal to $\ck_g$.
Assuming this orthogonality, hypothesis Theorem 10.1 of \cite{Maxwell:2014Drift} shows (in the smooth category)
that there is a conformal Killing field $Q$ and a vector field $W$ such that 
\begin{equation}
-\frac{1}{2}\ck^* \frac{1}{2N} \ck W  = \kappa \extd \left[\frac{1}{N}\div (V+Q)\right].
\end{equation}
Here $Q$ is uniquely determined up to a true Killing field, $W$ is uniquely determined up to a conformal Killing field, 
and $\drift[g]{W}$ is independent of the choice of representative of the drift $\mathbf V$.

This process can be reversed. Suppose $\mathbf W \in \Drift_g$ and let $W$ be any representative.  We now wish to
solve \eqref{eq:drift} for $V$, but to do so, 
the left-hand side of \eqref{eq:drift} must be orthogonal to the space of divergence-free vector fields.
Theorem 10.6 of \cite{Maxwell:2014Drift} shows (in the smooth category)
that there is a divergence-free vector field $E$ and a vector field $V$ such that 
\begin{equation}
-\frac{1}{2}\ck^* \frac{1}{2N} \ck (W+E)  = \kappa \extd \left[\frac{1}{N}\div (V)\right].
\end{equation}
Now $E$ is uniquely determined up to a true Killing field, $V$ is
uniquely determined up to a divergence-free vector field, and $\drift[g]{V}$
is independent of the choice of representative of the drift $\mathbf W$.

Motivated by this discussion, we assign a pair of drifts to a pair $(g,K)$ as follows.

\begin{definition}[Measurement of Drift]\label{def:driftmeas}
Suppose $g$ is a $W^{k,p}$ metric, $K\in W^{k-1,p}(M,S_2M)$ and $\alpha$ is a $W^{k,p}$ volume form. 
Set $N=dV_g/\alpha$ and decompose
\begin{equation}
K = A + \frac{\tau}{n} g,
\end{equation}
where $A$ is trace-free. Now use Lemmas \ref{lem:yorksplit} and \ref{lem:yorksplitvol} to write
\begin{equation}
A = \sigma + \frac{1}{2N} \ck W
\end{equation}
and
\begin{equation}
\tau = \tau_* + \frac{1}{N} \div V.
\end{equation}
The \textit{volumetric drift} of $(g,K)$ measured by $\alpha$ is $\drift[g]{V}$, and
the \textit{conformal drift} measured by $\alpha$ is $\drift[g]{W}$.
\end{definition}

In our application of drifts to the construction of near-CMC solutions of the Einstein constraint 
equations we shall specify the conformal class of the solution metric and, among other parameters, 
the volumetric drift. Since drift is defined in terms of a metric rather than the conformal
class, one needs to be able to specify a drift at an unknown solution metric $\overline g = \phi^{q-2}g$ 
starting from a given representative $g$ of the conformal class.  One can always specify the vector field 
$V$ and let it determine the drift $\drift[\overline g]{V}$, but unless one knows the conformal
factor $\phi$, it is impossible to know \textit{a priori} whether $V$ is divergence-free with respect 
to the solution metric and hence $\drift[\overline g]{V}=0$.  

To address this difficulty, suppose for the moment that the $W^{k,p}$ metric $g$ admits no (nontrivial) 
conformal Killing fields. The Helmholtz decomposition implies
\[
W^{k,p}(M,TM) = \mathcal E \oplus \mathcal E^\perp
\]
where $\mathcal E$ is the set of $W^{k,p}$ divergence-free vector 
fields and $\mathcal E^\perp$ is the image of $\grad$ acting 
on $W^{k+1,p}$ functions. The factors in the direct sum are $L^2$ orthogonal,
and the projection of a vector field $X$ onto $\mathcal E^\perp$
is $\grad u$ where $\Lap u = \div X$.  Because $g$  admits
no conformal Killing fields, the drifts at $g$ can be
identified with $\mathcal E^\perp$.  Moreover, for 
a conformally related metric $\overline g=\phi^{q-2} g$
the conformal transformation rule for gradients implies
\[
\mathcal E^\perp_{\overline g} = \phi^{2-q} \mathcal E^\perp_{g}.
\]
Hence in absence of conformal Killing fields
we have a mechanism for parameterizing drifts within a 
conformal class: drifts can be represented by elements 
of $\mathcal E^\perp_g$
and $V\in \mathcal E^\perp_{g}$ 
corresponds to $\phi^{2-q}V\in \mathcal E^\perp_{\overline g}$.

In the event that $g$ admits conformal Killing fields the
representation of drift within a conformal class is less 
straightforward because divergence-free vector fields and conformal
Killing fields obey different conformal transformation laws.  
In this case the drifts at $g$ can be identified with any one
of a number of subspaces of $\mathcal E^\perp$, and it seems natural to use the $L^2$ orthogonal 
complement of $P(\mathcal Q)$, where $P$ is 
the $L^2$ projection of $W^{k,p}(M,TM)$ 
onto $\mathcal E^\perp$ discussed above. 

\begin{definition} A \textit{canonical drift representative} 
at a $W^{k,p}$ metric $g$
is a vector field $V\in \mathcal E^\perp$ satisfying
\[
\int_M g(V,P(Q)) \;dV_g = 0
\]
for all conformal Killing fields $Q$.  
The set of canonical drift representatives at $g$
is denoted by $\mathcal D_g$.
\end{definition}

It is easy to see that the map $V \mapsto \drift[g]{V}$
from $\mathcal D_g$ to 
$\Drift_g$ 
is a Banach space isomorphism and hence $\Drift_g$ can be identified with
a subspace of $\mathcal E_g^\perp$ with codimension
equal to $\dim P(\mathcal Q)$. This codimension need not 
be constant among all representatives
of a conformal class, however.  Indeed,  
a conformal Killing field $Q$ is 
a true Killing field exactly
when it is divergence-free, i.e. when $P(Q)=0$. 
Thus
$\dim P(\mathcal Q) \le \dim \mathcal Q$ with strict inequality
whenever the metric admits nontrivial Killing fields. 
The non-constant codimension of $\mathcal D_g$ in $\mathcal E_g^\perp$
poses an obstacle 
to the universal representation of drift for a fixed conformal class. Nevertheless, our main application of
drifts to the conformal method is perturbative, and
the following lemma shows that we can use $\mathcal D_g$
to identify drifts at nearby representatives of the conformal class 
so long as $g$ does not admit any true Killing fields.

\begin{lemma}\label{lem:driftrep}
Let $g$ be a $W^{k,p}$ metric.
Given a conformal factor $\phi\in W^{k,p}$ let $\overline g=\phi^{q-2}g$.
The map 
\[
V\mapsto \drift[\overline g]{\phi^{2-q}V}
\]
from $\mathcal D_{g}$ to $\Drift_{\overline g}$
is an isomorphism if either
\begin{itemize}
\item $g$ admits no (nontrivial) conformal Killing fields, or
\item $g$ admits no (nontrivial) Killing fields and $\phi$ is sufficiently
close to $1$ in $W^{k,p}$.
\end{itemize}
\end{lemma}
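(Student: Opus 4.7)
The plan is to treat the two hypotheses separately; the first case follows directly from the conformal transformation rule for gradients, while the second requires a finite-dimensional perturbation argument off $\phi = 1$.

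In the case where $g$ admits no conformal Killing fields, $\mathcal Q = 0$, the orthogonality condition defining $\mathcal D_g$ is vacuous, and so $\mathcal D_g = \mathcal E^\perp_g$. Conformal invariance of the space of conformal Killing fields implies $\overline g$ also admits none, so $\Ker \ck_{\overline g} + \Ker \div_{\overline g} = \Ker \div_{\overline g}$ and $\Drift_{\overline g}$ is canonically identified with $\mathcal E^\perp_{\overline g}$. The map $V \mapsto \drift[\overline g]{\phi^{2-q}V}$ thus reduces to the scaling $V \mapsto \phi^{2-q}V$, which is an isomorphism from $\mathcal E^\perp_g$ onto $\mathcal E^\perp_{\overline g}$ by the rule $\mathcal E^\perp_{\overline g} = \phi^{2-q}\mathcal E^\perp_g$ noted earlier.

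For the second case, let $\overline P$ be the $\overline g$-Helmholtz projection and note that $\mathcal Q = \Ker \ck_{\overline g}$ is unchanged. Identifying $\Drift_{\overline g} \cong \mathcal E^\perp_{\overline g} / \overline P(\mathcal Q)$ and using $\mathcal E^\perp_{\overline g} = \phi^{2-q}\mathcal E^\perp_g$, the claim that $V \mapsto \drift[\overline g]{\phi^{2-q}V}$ is an isomorphism is equivalent to the direct sum decomposition
\begin{equation*}
\mathcal E^\perp_g = \mathcal D_g \oplus \phi^{q-2}\overline P(\mathcal Q).
\end{equation*}
Here $\mathcal D_g$ is a fixed closed subspace of $\mathcal E^\perp_g$ whose $L^2(g)$-orthogonal complement inside $\mathcal E^\perp_g$ is $P(\mathcal Q)$, and $\phi^{q-2}\overline P(\mathcal Q)$ is a $\phi$-dependent finite-dimensional subspace.

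Two short perturbation arguments finish the proof. First, a direct calculation gives $\div_{\overline g} Q = \div_g Q + q\,Q(\ln \phi)$ for any $Q \in \mathcal Q$, so on the finite-dimensional space $\mathcal Q$ the map $Q \mapsto \div_{\overline g} Q$ is a perturbation of $Q \mapsto \div_g Q$, whose kernel is trivial by the no-Killing-field assumption; stability of injectivity for finite-dimensional linear maps then shows $\overline g$ also admits no Killing fields when $\phi$ is close to $1$ in $W^{k,p}$, and hence $\overline P|_{\mathcal Q}$ is injective and $\dim \phi^{q-2}\overline P(\mathcal Q) = \dim \mathcal Q$, matching the codimension of $\mathcal D_g$. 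Second, letting $\pi : \mathcal E^\perp_g \to P(\mathcal Q)$ denote the $L^2(g)$-projection with kernel $\mathcal D_g$, the linear map $T_\phi : \mathcal Q \to P(\mathcal Q)$ defined by $T_\phi(Q) = \pi(\phi^{q-2}\overline P(Q))$ reduces to $P|_{\mathcal Q}$ when $\phi = 1$, which is an isomorphism. Continuous dependence of $\overline P$ on $\phi$ and stability of isomorphisms between equidimensional finite-dimensional spaces ensure $T_\phi$ remains an isomorphism for $\phi$ sufficiently close to $1$, yielding both the triviality of the intersection $\mathcal D_g \cap \phi^{q-2}\overline P(\mathcal Q)$ and the surjectivity needed for the direct sum. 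The main obstacle is precisely this perturbative bookkeeping in the second case; fortunately all the obstructions live on the finite-dimensional space $\mathcal Q$, where stability under small perturbations is straightforward.
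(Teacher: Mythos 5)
Your argument is correct, and the first case is identical to the paper's. For the second case you and the paper rely on the same underlying mechanism --- a perturbation off $\phi=1$ exploiting the continuity of the Helmholtz projection in $\phi$, the injectivity of $P_1|_{\mathcal Q}$ coming from the no-Killing-fields hypothesis, and the finite-dimensionality of $\mathcal Q$ --- but the packaging is genuinely different. The paper reduces the claim to showing that $F_\phi = D_\phi\circ S_\phi:\mathcal D_1\to\mathcal D_\phi$ is an isomorphism, and gets there by composing projections into an operator $B_\phi = (D_1\circ S_\phi^{-1})\circ(D_\phi\circ S_\phi)$ on $\mathcal D_1$ that equals the identity at $\phi=1$; surjectivity then falls out of a factorization of $B_\phi$ and injectivity requires a separate computation of $\Ker\bigl(D_1\circ S_\phi^{-1}|_{\mathcal D_\phi}\bigr)$ ending in another finite-dimensional continuity argument. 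You instead recast the target as the single transversality statement $\mathcal E^\perp_g = \mathcal D_g\oplus\phi^{q-2}\overline P(\mathcal Q)$ and verify it by checking that one finite-dimensional map $T_\phi:\mathcal Q\to P(\mathcal Q)$ stays an isomorphism near $\phi=1$, which delivers trivial intersection and surjectivity of the sum in one stroke. Your route is arguably cleaner in that it isolates exactly where the finite-dimensional obstruction lives and avoids the two-step surjectivity/injectivity bookkeeping; the paper's route has the mild advantage of never leaving the category of maps between the spaces $\mathcal D_1$ and $\mathcal D_\phi$ that are actually used later in the perturbation argument of Theorem \ref{thm:smalldrift-conf}. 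The one ingredient you invoke without proof --- continuous dependence of $\overline P$ on $\phi$ --- is also asserted with only a one-line justification in the paper (it follows from continuity in $\phi$ of the solution operator for the Poisson problem defining the projection), so this is not a gap.
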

\begin{proof}
Let $\mathcal E_\phi$, $\mathcal E_\phi^\perp$, $\mathcal D_\phi$
and so forth represent objects associated with the metric 
$\overline g=\phi^{q-2}g$, let $P_\phi$ be the $\ol g$-$L^2$ projection
of $W^{k,p}(M,TM)$ onto $\mathcal E_\phi^\perp$, 
and let $D_\phi$ be the $\ol g$-$L^2$ projection 
of $W^{k,p}(M,TM)$ onto $\mathcal D_\phi$.  One
readily verifies that these maps are continuous, in part
using the
standing hypotheses on $k$ and $p$ (which ensure that $W^{k,p} \subset L^2$), along with the fact that $P(\mathcal Q)$ is finite dimensional.
Given a vector field $V$, the projections $P_\phi(V)$ and $(D_\phi\circ P_\phi)(V)$ differ from $V$ by linear combinations of conformal Killing fields
and $\overline g$-divergence-free vector fields.  Hence
\[
\drift[\overline g]{V} = \drift[\overline g]{P_\phi V} = 
\drift[\overline g]{(D_\phi\circ P_\phi) (V)}.
\]
In particular, if $V\in \mathcal D_1$, then $\phi^{2-q} V\in \mathcal E_\phi^\perp$ and 
\[
\drift[\overline g]{\phi^{2-q}V} = \drift[\overline g]{D_\phi(\phi^{2-q}V)}.
\]
Since the projection from $\mathcal D_{\overline g}$ 
onto $\Drift_{\overline g}$ is an isomorphism it is therefore enough to
show that $F_\phi:\mathcal D_1 \rightarrow \mathcal D_\phi$
defined by
\[
F_\phi(V) = D_\phi( \phi^{q-2} V)
\]
is an isomorphism under the given hypotheses on $g$ and $\phi$. 

Suppose first that $g$ admits no conformal Killing fields. 
In this case $\mathcal D_1 = \mathcal E_1^\perp$,
$\mathcal D_\phi = \mathcal E_\phi^\perp$,  and the result
follows from the previously discussed isomorphism 
$\mathcal E_1^\perp\rightarrow \phi^{2-q}\mathcal E_1^\perp=\mathcal E_\phi^\perp$.

Now consider
the case where $g$ admits conformal Killing fields,
but no Killing fields.  Let $G_\phi$  be the 
$\ol g$-$L^2$ projection of $W^{k,p}(M,TM)$
onto $P_\phi(\mathcal Q)$; we claim that $G_\phi$
is continuous in $\phi$ when thought of as a map
with codomain $W^{k,p}(M,TM)$.  Indeed first
note that the maps $P_\phi$, defined previously in terms of
solving a Poisson problem for the metric $\ol g$,
are continuous in $\phi$. Hence,
fixing a basis $\{Q_i\}$ for $\mathcal Q$,
the vectors $P_\phi(Q_i)$ also depend continuously
on $\phi$.  Moreover, since $g$ has no Killing fields
the map $P_1|_{\mathcal Q}$ is injective, 
and the continuity of $P_\phi$ with respect to
$\phi$ (along with the fact that $\mathcal Q$ is finite
dimensional) ensures that 
$P_\phi|_{\mathcal Q}$ is injective for $\phi$
sufficiently close to 1. Hence the vectors
$\{P_\phi(Q_i)\}$ are linearly independent.
The map taking a frame (in this case $\{P_\phi(Q_i)\}$) to 
an orthonormal frame via the Gram-Schmidt algorithm is continuous in
the frame and the inner product jointly, and 
writing the projection $G_\phi$ with respect
to the orthonormal frame it readily follows 
that $G_\phi$ is continuous
in $\phi$, as is $D_\phi = \rm{Id}-G_\phi$ with
codomain $W^{k,p}(M,TM)$.

Now consider the maps 
\[
B_\phi = (D_1\circ S_\phi^{-1})\circ (D_\phi\circ S_\phi):\mathcal D_1 \rightarrow \mathcal D_1
\]
where $S_\phi(V) = \phi^{2-q} V$.  From our observation
that $D_\phi$ is continuous in $\phi$, so are the maps $B_\phi$.
And since $B_1=\mathrm{Id}$, we conclude that $B_\phi$ is 
an isomorphism for $\phi$ sufficiently close to 1.
Noting that
\begin{equation}\label{eq:Bfactorize}
B_\phi = (D_1\circ S_\phi^{-1}|_{\mathcal D_\phi})\circ F_\phi,
\end{equation}
to show that $F_\phi$ 
is an isomorphism
for $\phi$ close to $1$ it is therefore enough to establish
the same fact for 
$D_1\circ S_\phi^{-1}|_{\mathcal D_\phi}:\mathcal D_\phi \rightarrow \mathcal D_1$.  Moreover, the factorization 
\eqref{eq:Bfactorize} already implies that 
$D_1\circ S_\phi^{-1}|_{\mathcal D_\phi}$ is surjective for all conformal
factors sufficiently near $1$, and we need only 
establish injectivity.  

The kernel of $D_1$ is $P_1(\mathcal Q)$
and hence 
\[
\Ker D_1\circ S_\phi^{-1}|_{\mathcal D_\phi} = (S_\phi\circ P_1)(\mathcal Q)
\cap \mathcal D_\phi.
\]
Now $(S_\phi\circ P_1)(\mathcal Q)\subseteq \mathcal E_\phi^\perp$, and since
$\Ker G_\phi|_{\mathcal E_\phi^\perp} = \mathcal D_\phi$,
to show that the subspace $(S_\phi\circ P_1)(\mathcal Q)
\cap \mathcal D_\phi$ is trivial (for $\phi$ close to 1) it is enough to show that
\[
G_\phi \circ S_\phi\circ P_1|_{\mathcal Q}:\mathcal Q\rightarrow W^{k,p}(M,TM)
\]
is injective. 
But this follows from the fact that
this family of linear maps has finite-dimensional domain,
is continuous in $\phi$, and is injective at $\phi\equiv 1$.
\end{proof}

\subsection{Parametrizations of the Constraints Using Conformal Deformation, Expansion and Drift}
\label{sec:driftform}
Recall that in the standard conformal method we prescribe
the conformal class, conformal momentum, and mean curvature of the solution.
In the drift formulation, we replace the mean curvature
with the combination of volumetric momentum and either volumetric or conformal drift.

Consider a solution $(\overline g, K)$ of the vacuum constraint
equations, and let $\alpha$ be a volume gauge.  From Lemmas
\ref{lem:yorksplit} and \ref{lem:yorksplitvol}
the solution uniquely determines
\begin{itemize} 
	\item a conformal class $[\overline g]$,
	\item a conformal momentum measured by $\alpha$
	represented by $(\overline g, \overline \sigma)$
	where $\overline \sigma$ is transverse-traceless with
	respect to $\overline g$,
	\item a volumetric momentum $-2\kappa \tau_*$ measured by $\alpha$
	\item and a volumetric drift $\drift[\overline g]{\overline V}$ measured by $\alpha$.
\end{itemize}
The first three parameters can be prescribed in a conformally invariant
fashion by choosing a representative $g$
of the conformal class, along with a $g$-transverse-traceless
tensor $\sigma$ and a constant $\tau_*$.  Then
$\overline g=\phi^{q-2}g$ for some conformal factor
$\phi$ and $\overline \sigma=\phi^{-2} \sigma$.  As
for the drift, recall from Lemma \ref{lem:driftrep} 
that the map $\mathcal D_g \rightarrow \Drift_{\overline g}$
given by $V \mapsto \drift[\overline g]{\phi^{2-q}V}$ is an isomorphism
so long as $g$ has no conformal Killing fields, or so
long as $g$ has no Killing fields and $\overline g$
is sufficiently close to $g$.  Hence we will select $V\in\mathcal D_g$
and set $\overline V = \phi^{2-q}V$, and the aim of the drift
method is to recover the solution of the constraint equation from
these parameters.  

More precisely, we prescribe the following conformal data:
\begin{equation}
(g, \sigma, \tau_*, V;\; N)
\end{equation}
where $\sigma$ is transverse-traceless, $\tau_*$ is a constant, 
$V\in \mathcal D_g$ is a canonical drift representative at $g$, 
and $N$ is lapse 
specifying a volume gauge $\alpha$ according to the relationship $N=dV_g/\alpha$.
We seek a solution $(\phi,W,Q)$ of the following 
variation of equations 
from \cite{Maxwell:2014Drift} Section 12: 
\begin{equation}\label{eq:driftV-vac}
\begin{aligned}
-a\Lap \phi + R \phi + \left|\sigma + \frac{1}{2N}\ck W \right|^2 \phi^{-q-1} 
+ \kappa \left(\tau_* + \frac{1}{N\phi^{q}}\div_\phi(\phi^{2-q}V + Q)) \right)^2 \phi^{q-1} &= 0\\
\frac{1}{2} \ck^* \left(\frac{1}{2N}\ck W\right) -\kappa \div_\phi^* \left( \frac{1}{N} \div_\phi (\phi^{2-q}V+Q) \right) &= 0
\end{aligned}
\end{equation}
where $W$ is an arbitrary vector field and $Q$ is a conformal Killing field.  Here we are using
the notation
\begin{equation}
\div_\phi = \phi^{-q} \div \phi^{q}
\end{equation}
for the divergence operator of the metric $\phi^{q-2}g$,
while
\begin{equation}
\div_\phi^* = -\phi^{q}\; \extd\; \phi^{-q}
\end{equation}
is the adjoint of $\div_\phi$ with respect to the background metric $g$.
The conformal Killing field $Q$ is determined by the CKF compatibility condition
\begin{equation}\label{eq:ckfcompat-driftV-vac}
\int \frac{1}{N} \div_\phi (\phi^{2-q}V+Q) \div_\phi P\;  dV_g  = 0
\end{equation}
for all conformal Killing fields $P$, which can be added to system
\eqref{eq:driftV-vac} to make the number of equations match the number of unknowns.

Supposing $(\phi,W,Q)$ solves system \eqref{eq:driftV-vac}, let 
\begin{equation}
\begin{aligned}
\label{eq:CEDV-reconstruct}
\ol g &= \phi^{q-2} g\\
\tau &= \tau_* + \frac{1}{N\phi^q} \div_\phi (\phi^{2-q}V+Q)\\
\ol K &=  \phi^{-2}\left[ \sigma +\frac{1}{2N}\ck W\right] +\frac{\tau}{n}\ol g.
\end{aligned}
\end{equation}
Following arguments from \cite{Maxwell:2014Drift} it follows 
that $(\ol g,\ol K)$ is a solution of the constraints
with conformal class $[g]$, conformal momentum represented by $(g,\sigma)$, 
volumetric momentum $-2\kappa\tau_*$, and volumetric drift $\drift[\overline g]{\phi^{2-q}V}$ as desired. We will
call system \eqref{eq:driftV-vac} together with \eqref{eq:ckfcompat-driftV-vac} 
the vacuum CED-V equations, short for conformal deformation, expansion, and (volumetric) drift.  

Alternatively, we can prescribe the conformal drift instead of the volumetric drift.
Starting with conformal data
\begin{equation}
(g, \sigma, \tau_*, W;\;N)
\end{equation}
with $W\in\mathcal D_{g}$ we seek a solution $(\phi,V,E)$ of
\begin{equation}\label{eq:driftC-vac}
\begin{aligned}
-a\Lap \phi + R \phi + \left|\sigma + \frac{1}{2N}\ck (\phi^{2-q}W + \phi^{-q}E) \right|^2 \phi^{-q-1} 
+ \kappa \left(\tau_* + \frac{1}{N\phi^{q}}\div_\phi(V + Q)) \right)^2 \phi^{q-1} &= 0\\
\frac{1}{2} \ck^* \left(\frac{1}{2N}\ck (\phi^{2-q}W + \phi^{-q}E)\right) -\kappa \div_\phi^* \left( \frac{1}{N} \div_\phi (V) \right) &= 0
\end{aligned}
\end{equation}
where $V$ is an arbitrary vector field and $E$ is divergence free.  The vector field $E$
is determined by the compatibility condition
\begin{equation}
-\int \frac{1}{4N}\ck (\phi^{2-q}W + \phi^{-q}E)\ck(\phi^{-q}F)\;dV_g = 0
\end{equation}
for all divergence-free vector fields $F$.
Given $(\phi,V,E)$ solving system \eqref{eq:driftC-vac}, let 
\begin{equation}
\begin{aligned}
\ol g &= \phi^{q-2} g\\
\tau &= \tau_* + \frac{1}{N\phi^q} \div_\phi (V)\\
\ol K &=  \phi^{-2}\left[ \sigma +\frac{1}{2N}(\ck W+\phi^{-q}E)\right] +\frac{\tau}{n}\ol g.
\end{aligned}
\end{equation}
We find $(\ol g,\ol K)$ is a solution of the constraints as before,
except that we have prescribed conformal drift $\drift[\overline g]{\phi^{2-q}W}$ rather than volumetric drift,
and we will call system \eqref{eq:driftC-vac} the vacuum CED-C equations.

\section{Conformal Description of Matter}\label{sec:matter}
Section \ref{sec:confmeth} described the conformal method in terms of natural geometric parameters such as 
conformal momentum. By contrast, the current literature for including matter in the conformal method is 
somewhat ad hoc, and is guided by finding formulations that make the problem mathematically 
tractable \cite{ChoquetBruhat:2009hv}.  We note, for example, the methods of scaling and unscaling sources,
in the vocabulary of \cite{ChoquetBruhat:2000iw}. It has long been understood that in the CMC 
case the conformal method is compatible with scaling sources, whereas unscaling sources lead to undesirable 
non-uniqueness properties \cite{Pfeiffer:2005iz}.  We also point to \cite{Isenberg:1977hy}, which enunciates
a fundamental guiding principle that leads to to the method of scaling sources; in effect we specify the 
configuration and momentum of matter independent of the metric.\footnote{In light of \cite{Isenberg:1977hy},
the term `scaling sources' is a misnomer.  In the method of scaling sources 
the configuration of matter is conformally invariant, and only the metric used to measure
it changes.} Given our interest in constructing near-CMC solutions, we employ scaling sources in the 
framework laid out in \cite{Isenberg:2005we}. This is described briefly here without any focus on 
the underlying principle of \cite{Isenberg:1977hy}.

We represent matter fields as sections $\mathcal F$ of a smooth vector bundle over $M$. 
The energy and momentum densities of the matter fields are functions jointly of $\mathcal F$ and the 
metric $g$,
\begin{equation}
\mathcal E(\mathcal F,g)\ \mbox{and} \ \mathcal J(\mathcal F,g)
\end{equation}
respectively, and with this notation the full
Einstein constraint equations read
\begin{subequations}\label{eq:constraints-matter}
\begin{alignat}{2}
R_{ g} - | K|_{ g}^2 + (\tr_{ g}  K)^2 &= 16\pi \mathcal E(\mathcal F,g) + 2\Lambda\\
\div_{ g}  K- \extd (\tr_{g} K) &= -J(\mathcal F,g)
\end{alignat}
\end{subequations}
where $\Lambda$ is the cosmological constant.

We assume that $\mathcal F$ obeys a conformal transformation law.
Specifically, if the metric changes from $g$ to $\hat g = \phi^{q-2}g$ then
the fields transform according to $\widehat{\mathcal{F}} = \Phi(\mathcal{F},\phi)$
where $\Phi$ is a group action of the conformal factors on the matter fields, i.e., $\Phi(\mathcal F,1)=\mathcal F$  and $\Phi(\Phi(\mathcal F,\phi_1),\phi_2)
=\Phi(\mathcal F,\phi_1\phi_2)$.  We assume moreover that any necessary 
compatibility conditions on the matter fields (e.g. the divergence-free condition
for magnetic fields) are preserved as we transform
from $g$ to $\hat g$ and $\mathcal F$ to $\widehat{ \mathcal F}$.
The key hypothesis for scaling sources is that
\begin{equation}\label{eq:momconftrans}
\mathcal J(\Phi(\mathcal F,\phi),\phi^{q-2}g) = \phi^{-q} \mathcal J(\mathcal F,g).
\end{equation}
This perhaps unmotivated scaling occurs naturally in practice and for CMC conformal data leads to a momentum 
constraint that is semi-decoupled from the Hamiltonian constraint.
Fixing $\calF$ at the metric $g$, the transformation law \eqref{eq:momconftrans}
amounts to assuming that the momentum
density is described by a one form $j$ that conformally transforms according to
\begin{equation}\label{eq:momentumtrans}
\hat j = \psi^{-q} j.
\end{equation}

Turning to the energy density, again fix $\mathcal F$ at $g$ and
define
\begin{equation}
\rho(\phi) = \mathcal E(\Phi(\mathcal F,\phi),\phi^{q-2}g).
\end{equation}
The details of this map 
depend strongly on the specific type of matter, and we make the following minimal hypothesis.
\begin{definition}\label{def:rhoscale}
A smooth map $\rho:W^{k,p}_+(M)\ra W^{k-2,p}(M)$ satisfies the 
\textit{energy scaling condition} if: 
\begin{enumerate}
\item
The linearization of $\rho$ at $\phi$ in the direction $\dot\phi$ can be written in the form
\begin{equation}
D\rho_\phi[\dot\phi] = r \dot\phi
\end{equation}
where $r \in W^{k-2,p}(M)$ depends on $\phi$.
\item Either 
\begin{itemize}
\item $\rho(\phi) \equiv 0$ for all $\phi\in W^{k,p}_+(M)$, or 
\item for all $\phi\in W^{k,p}_+(M)$
the $W^{k-2,p}(M)$ function that is the linearization of
\begin{equation}
\phi\mapsto \phi^{q-2}\rho(\phi)
\end{equation}
is non-positive and not identically zero.
\end{itemize}
\end{enumerate}
\end{definition}
As with hypothesis \eqref{eq:momentumtrans} for the momentum density, 
Definition \ref{def:rhoscale} is somewhat unmotivated, 
but admits the following
loose interpretation: energy density measured by the metric is a local property, depending on the value 
of $\phi$ but not its derivatives, and it grows at least as fast as $\phi^{2-q}$ as $\phi\ra 0$,
and decays at least as fast as  $\phi^{2-q}$ as $\phi\ra\infty$.
We will use the notation $\rho(\cdot)$ for the map $\rho$ as a reminder that it is a
function taking a conformal factor as an argument, rather than simply a function defined on $M$.

The framework for matter described here is broad enough to include a number
of important matter models including electromagnetism (and Yang-Mills fields more generally), 
perfect fluids (including dust), and Vlasov models.  These details were treated in \cite{Isenberg:2005we}, where the
energy scaling condition appears in a somewhat obscured form as hypothesis 
N1.\footnote{Condition N1 of \cite{Isenberg:2005we} is equivalent to
\begin{equation}
\phi\mapsto \phi^{q-2}\rho(\phi)
\end{equation}
being decreasing in $\phi$. The hypothesis
of Definition \ref{def:rhoscale} that it is \textit{strictly} 
decreasing somewhere (except in vacuum) does
not appear explicitly in \cite{Isenberg:2005we}, but is easy enough to verify from the expressions computed in that paper that this
additional condition is satisfied for all the specific matter
fields treated in that work.}
This framework notably excludes scalar fields, however, where condition 2 of Definition \ref{def:rhoscale} fails, and we refer
to \cite{Hebey:2008gk} for alternate techniques needed to include scalar fields in the conformal method.
As a concrete example, consider electromagnetism in $3$-dimensions without charged sources.  
The matter fields consist of divergence-free one-forms $E$ and $B$ representing the
electric and magnetic fields and we have energy and momentum densities
\begin{equation}
\begin{aligned}
\calE(E,B,g) &= |E|^2_g+|B^2|_g^2\\
\calJ(E,B,g) &= *_g(E\wedge B)
\end{aligned}
\end{equation}
where $*_g$ is the Hodge-star operator.  We conformally transform the fields according to
$\Phi((E,B),\phi)=(\phi^{-2}E,\phi^{-2} B)$ which preserve the conditions that
these one-forms must be divergence-free.  One readily verifies that
\begin{equation}
\begin{aligned}
\calJ(\phi^{-2}E,\phi^{-2}B,\phi^{q-2}g) &= \phi^{-2}*_g(\phi^{-2}E\wedge \phi^{-2} B)\\
&= \phi^{-q} \calJ(E,B,g)
\end{aligned}
\end{equation}
since $q=6$ when $n=3$. Thus we can take $j = *_g(E\wedge B)$.  For the Hamiltonian constraint we have
\begin{equation}
\begin{aligned}
\calE(\phi^{-2}E,\phi^{-2}B,\phi^{q-2}g) &= \phi^{-8}|E|^2+\phi^{-8}|B|^2
\end{aligned}
\end{equation}
and hence
\begin{equation}
\rho(\phi) = \left[|E|^2 + |B|^2\right]\phi^{-8}.
\end{equation}
Noting that $q-2=4$ when $n=3$, 
\begin{equation}
\phi^{q-2}\rho(\phi) = \left[|E|^2 + |B|^2\right]\phi^{-4}
\end{equation}
which evidently satisfies the energy scaling condition.

For convenience, we treat the cosmological constant $\Lambda$ as an additional form of matter,
and we will call a triple $(\rho(\cdot),j,\Lambda)$ where $\rho(\cdot)$ satisfies
the conditions of Definition \ref{def:rhoscale} a \textit{conformal matter distribution}.
The CTS-H equations include a conformal matter distribution according to
\begin{equation}
\begin{aligned}\label{eq:CTS-H}
-a\Delta \phi +R\phi- \left| \sigma + \frac{1}{2N}\ck W\right|^2 \phi^{-q-1} + \kappa \tau^2 \phi^{q-1}  &= 2\left[ 8\pi \rho(\phi) \phi^{q-1} + \Lambda\phi^{q-1}\right]\\
\frac{1}{2}\ck^* \left[ \frac{1}{2N} \ck W \right] + \kappa \phi^{q} d\tau &=  8\pi j.
\end{aligned}
\end{equation}
If $(\phi,W)$ is a solution of these equations then $(\ol g,\ol K)$ defined
by equations \eqref{eq:ctshsol} solve the constraint equations for matter fields
$\ol{\mathcal F} = \Phi(\calF,\phi)$  giving an energy density $\ol\rho = \rho(\phi)$ and
a momentum density $\ol j = \phi^{-q} j$. We will call $(\ol\rho,\ol j,\Lambda)$
a physical matter distribution. An easy computation using the fact that 
$\Phi$ is group action shows that the CTS-H equations with matter are conformally
covariant as well, so long as when we conformally transform
to $\hat g=\psi^{q-2}g$ we also transform to the field $\hat{\mathcal F} = \Phi(\calF,\hat g)$
to obtain $\hat\rho(\cdot) = \rho(\psi\; \cdot)$ and $\hat j = \psi^{-q}j$.

Because the drift formulations differ from the CTS-H equations only in their treatment of the
mean curvature, a conformal matter distribution $(\rho(\cdot),j,\Lambda)$ appears in the drift formulations
of the constraint equations in exactly the same way as for the CTS-H equations.  Simply replace the
zeros on the right-hand sides of equations \eqref{eq:driftV-vac} or \eqref{eq:driftC-vac}
with the right-hand sides of equations \eqref{eq:CTS-H}, but observe
that the associated compatibility conditions need to account for the momentum density. 

The CED-V equations in their final form, extending system \eqref{eq:driftV-vac} to include matter, are
\begin{equation}\label{eq:driftV2}
\begin{aligned}
-a\Lap \phi + R \phi & + \left|\sigma + \frac{1}{2N}\ck W \right|^2 \phi^{-q-1} 
+ \kappa \left(\tau_* + \frac{1}{N\phi^{q}}\div_\phi(\phi^{2-q}V + Q)) \right)^2 \phi^{q-1}  \\ & \qquad \qquad = 
2\left[ 8\pi \rho(\phi) \phi^{q-1} + \Lambda\phi^{q-1}\right]\, ;\\
& \frac{1}{2} \ck^* \left(\frac{1}{2N}\ck W\right) -\kappa \div_\phi^* \left( \frac{1}{N} \div_\phi (\phi^{2-q}V+Q) \right) = 8\pi j
\end{aligned}
\end{equation}
where the CKF compatibility condition \eqref{eq:ckfcompat-driftV-vac} becomes
\begin{equation}\label{eq:ckfcompat-driftV}
\kappa\int \frac{1}{N} \div_\phi (\phi^{2-q}V+Q) \div_\phi P\;  dV_g  = -8\pi \int_M j_a P^a dV_g
\end{equation}
for all conformal Killing fields $P$.

Analogously, CED-C equations with matter, generalizing system \eqref{eq:driftC-vac}, are
\begin{equation}\label{eq:driftC}
\begin{aligned}
-a\Lap \phi + R \phi & + \left|\sigma + \frac{1}{2N}\ck (\phi^{2-q}W + \phi^-q E) \right|^2 \phi^{-q-1} 
+ \kappa \left(\tau_* + \frac{1}{N\phi^{q}}\div_\phi(V + Q)) \right)^2 \phi^{q-1}  \\ & \qquad \qquad = 
2\left[ 8\pi \rho(\phi) \phi^{q-1} + \Lambda\phi^{q-1}\right] \, ;\\
& \frac{1}{2} \ck^* \left(\frac{1}{2N}\ck (\phi^{2-q}W + \phi^{-q}E)\right) -\kappa \div_\phi^* \left( \frac{1}{N} \div_\phi (V) \right) = 8\pi j,
\end{aligned}
\end{equation}
with compatibility condition
\begin{equation}
\int \frac{1}{4N}\ck (\phi^{2-q}W + \phi^{-q}E)\ck(\phi^{-q}F)\;dV_g = 8\pi \int j_a \phi^{-q} F^a dV_g
\end{equation}
for all divergence-free vector fields $F$.

\section{Near-CMC Solutions on Compact Manifolds using Drifts}\label{sec:small-drift-cpct}
In this section we prove the main result that, loosely stated, the drift method provides a good parameterization of solutions of the 
constraint equations on compact manifolds near CMC solutions, even when the metric admits conformal Killing fields. 

To begin, we characterize the CMC solutions with respect to drift parameters.
\begin{lemma}\label{lem:driftcmc} Suppose $(\ol g, \ol K)$ is a solution of the 
constraints equations \eqref{eq:constraints} with matter fields $(\ol \rho,\ol j,\Lambda)$, 
and let $\alpha$ be an arbitrary volume gauge. The solution is CMC if and only if 
\begin{enumerate}
\item the solution has zero volumetric drift measured by $\alpha$, and
\item for all conformal Killing fields $P$,
\begin{equation}\label{eq:Jisperp}
\int \ol j_a P^a  dV_{\ol g} = 0.
\end{equation}
\end{enumerate}
\end{lemma}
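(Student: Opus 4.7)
I would reduce both directions of the equivalence to a single integration-by-parts identity relating $\tau$ and $\ol j$. Pairing the matter-sourced momentum constraint $\div_{\ol g}\ol K - \extd\tau = -\ol j$ with a conformal Killing field $P$ of $\ol g$, integrating over $M$, and integrating by parts while invoking the conformal Killing equation---the same manipulation that produced the vacuum CKF compatibility condition in Section 3, now carrying the extra source $\ol j$---should give the master identity
\[
\kappa \int_M \tau \, \div_{\ol g} P \, dV_{\ol g} = -\int_M \ol j_a P^a \, dV_{\ol g}
\]
for every $P\in\Ker\ck_{\ol g}$. With this in hand both implications become very short.

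For the forward direction, assume $\tau$ is constant. The averaging formula for $\tau_*$ in Lemma \ref{lem:yorksplitvol} gives $\tau_* = \tau$, so $V\equiv 0$ is a valid representative in $\tau = \tau_* + \frac{1}{N}\div_{\ol g} V$ and the measured volumetric drift is trivially zero. Plugging the CMC assumption into the master identity collapses the left-hand side to $\kappa\tau\int_M\div_{\ol g} P\, dV_{\ol g}=0$ by the divergence theorem, so \eqref{eq:Jisperp} is immediate.

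For the reverse direction, the zero-drift hypothesis together with Definition \ref{def:drift} lets me write the representative $V$ as $V = Q + E$ with $Q \in \Ker\ck_{\ol g}$ and $\div_{\ol g} E = 0$, so that $\tau = \tau_* + \frac{1}{N}\div_{\ol g} Q$. Condition 2 forces the right-hand side of the master identity to vanish for every $P\in\Ker\ck_{\ol g}$. Substituting the decomposition of $\tau$ and discarding the $\tau_*$ piece (which integrates to zero) leaves
\[
\int_M \frac{1}{N}(\div_{\ol g} Q)(\div_{\ol g} P)\, dV_{\ol g} = 0 \qquad \text{for every } P\in\Ker\ck_{\ol g}.
\]
The decisive step is to test against $P = Q$ itself, which yields $\int_M N^{-1}(\div_{\ol g} Q)^2\, dV_{\ol g} = 0$; since $N>0$ this forces $\div_{\ol g} Q \equiv 0$, and hence $\tau = \tau_*$ is constant.

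The only genuine bookkeeping lies in deriving the master identity, but this is a direct source-carrying extension of the vacuum computation already performed in Section 3, so I do not anticipate a serious obstacle. The rest of the proof is pure algebraic manipulation of the two hypotheses through that identity, with the crucial trick being the self-pairing $P = Q$ that converts the integral relation into a pointwise vanishing statement.
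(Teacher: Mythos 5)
Your proof is correct and follows essentially the same route as the paper: both directions rest on pairing the momentum constraint with a conformal Killing field and integrating by parts (your ``master identity'' is just the paper's computation with the York/volumetric splitting substituted in), and the decisive self-pairing $P=Q$ giving $\int N^{-1}(\div_{\ol g}Q)^2\,dV_{\ol g}=0$ is exactly the step the paper uses. No gaps.
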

\begin{proof}
Write $\ol K= \ol A + \frac{\tau}{n} \ol g$, where $\ol A$ is trace-free with respect to $\ol g$, and then, applying Lemmas \ref{lem:yorksplit} 
and \ref{lem:yorksplitvol} with $N=dV_g/\alpha$, decompose further as
\begin{equation}\label{eq:Atau}
\ol A = \ol \sigma + \frac{1}{2N} (\ck \ol W),\ \ \tau = \tau_* + \frac{1}{N} \div \ol V,
\end{equation}
where $\ol \sigma$ is transverse-traceless, $\tau_*$ is constant, and $\ol W$ and $\ol V$ are vector fields.  With respect
to this decomposition, the momentum constraint reads
\begin{equation}\label{eq:mom-drift-V}
\frac{1}{2}\ck ^* \frac{1}{2N} \ck \ol W + \kappa \extd \frac{1}{N} \div \ol V = \ol j.
\end{equation}

First suppose that the solution is CMC. The expression for $\tau$ in \eqref{eq:Atau} implies that $\int N( \tau - \tau_*)\, dV_{\ol g} 
= \int \div \ol V\, dV_{\ol g} = 0$, and since $N>0$ everywhere, we see first that $\tau = \tau_*$ and then that $\div \ol V = 0$. 
Recall now from Definition \ref{def:driftmeas} that the volumetric drift measured by $\alpha$ 
is $\ol V + (\Ker \ck + \Ker \div)$.  Since $\ol V\in \Ker\div$, the solution has zero volumetric drift.  Moreover, multiplying equation
\eqref{eq:mom-drift-V} by a conformal Killing field and integrating by parts on the left-hand side yields \eqref{eq:Jisperp}.

Conversely, suppose the solution has zero volumetric drift measured by $\alpha$
and that \eqref{eq:Jisperp} holds. Since the solution has zero volumetric drift we can write $\ol V=E+Q$ where
$E$ is divergence-free and $Q$ is a conformal Killing field.  Observe that $\div \ol V = \div Q$.  Now multiply 
the momentum constraint \eqref{eq:mom-drift-V} by $Q$ and integrate by parts to get
\begin{equation}\label{eq:mom+Q}
0 = \int \ol j_a Q^a dV_g = -\kappa \int \frac 1 N (\div \ol V)(\div Q) \; dV_{\ol g}= -\kappa \int \frac 1 N (\div Q)^2 \; dV_{\ol g},
\end{equation}
i.e., $\div Q = 0$.  Finally, 
\begin{equation}
\tau = \tau_* + \frac{1}{N}\div \ol V = \tau_* + \frac{1}{N}\div Q = \tau_*,
\end{equation}
so the solution is CMC.
\end{proof}

Lemma \ref{lem:driftcmc} suggests that the volumetric form of the drift method is nicely adapted to generate CMC solutions.
Indeed, if a volumetric drift conformal data $(g,\sigma,\tau_*,V;\; N)$ generates a solution metric $\ol g = \phi^{q-2} g$,  then
the corresponding volumetric drift is $\drift[\ol g]{\phi^{2-q}V}$. This means that $V=0$ suffices; furthermore, at least when the
metric admits no conformal Killing fields, Lemma \ref{lem:driftrep} implies that $V=0$ is the only choice in $\mathcal D_g$
which results in a zero volumetric drift.    The only potential difficulty is that \eqref{eq:Jisperp} involves the unknown solution 
metric and the solution momentum density $\ol j$. Fortunately, the scaling law \eqref{eq:momentumtrans} for momentum density 
ensures that \eqref{eq:Jisperp} is conformally invariant, so this is not a real problem.

\begin{corollary}\label{cor:driftCMC}
Consider a CED-V data set $(g,\sigma,\tau_*,V;\; N)$  with $V\equiv 0	$ and a conformal matter distribution 
$(\rho(\cdot),j,\Lambda)$ that generates a solution $(\phi, W,Q)$ of the CED-V equations. The associated 
solution of the constraint equations is CMC if and only if 
\begin{equation}\label{eq:jcompat}
\int j_a P^a\;dV_g = 0
\end{equation}
for all conformal Killing fields $P$.  Moreover, when the solution is CMC, then $Q$ is  a true Killing field for the solution metric, 
$(\phi,W,\tilde Q\equiv 0)$ is also a solution of the CED-V equations that generates the same solution of the constraints, and 
$(\phi,W)$ solves the standard CTS-H equations \eqref{eq:CTS-H} with constant mean curvature $\tau = \tau_*$.
\end{corollary}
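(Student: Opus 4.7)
The plan is to apply Lemma \ref{lem:driftcmc} to the reconstructed solution $(\ol g, \ol K)$ given by \eqref{eq:CEDV-reconstruct}, using the volume gauge $\alpha = dV_g/N$. That lemma characterizes CMC solutions by two conditions: (a) zero volumetric drift measured by $\alpha$, and (b) $\int \ol j_a P^a\, dV_{\ol g} = 0$ for every conformal Killing field $P$. The crux is that under $V \equiv 0$, condition (a) is automatic, so the iff reduces entirely to reexpressing (b) in terms of $j$ on the background metric.

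To see that (a) is automatic, set $\ol N = dV_{\ol g}/\alpha = \phi^q N$ and recall that $\div_\phi$ is exactly the $\ol g$-divergence; the reconstruction formula then reads $\tau = \tau_* + (1/\ol N)\,\div_\phi(\phi^{2-q}V + Q)$, so by uniqueness in Lemma \ref{lem:yorksplitvol} the volumetric drift of the solution is $\drift[\ol g]{\phi^{2-q}V + Q}$. When $V \equiv 0$ this is $\drift[\ol g]{Q} = 0$, since the conformal Killing equation is conformally invariant and hence $Q \in \Ker \ck_{\ol g}$. For condition (b), the scaling law \eqref{eq:momentumtrans} gives $\ol j = \phi^{-q} j$ and $dV_{\ol g} = \phi^q dV_g$; since the pairing $j_a P^a$ of a one-form with a vector field is metric-independent, the $\phi^{\pm q}$ factors cancel and \eqref{eq:Jisperp} becomes precisely \eqref{eq:jcompat}. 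Both directions of the iff then follow from Lemma \ref{lem:driftcmc}.

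For the ``moreover'' clause, assume the solution is CMC. Then $\tau \equiv \tau_*$ combined with the reconstruction forces $\div_\phi Q = 0$, so $Q$ is both a conformal Killing field of $\ol g$ and $\ol g$-divergence-free, hence a true Killing field of $\ol g$. In the CED-V equations \eqref{eq:driftV2}, $Q$ appears only through $\div_\phi(\phi^{2-q}V + Q)$, which vanishes both for the existing $(\phi, W, Q)$ (since $V = 0$ and $\div_\phi Q = 0$) and trivially for $(\phi, W, \tilde Q \equiv 0)$. Consequently the two equations of \eqref{eq:driftV2} hold with $\tilde Q \equiv 0$, the compatibility condition \eqref{eq:ckfcompat-driftV} degenerates to $0 = -8\pi\int j_a P^a\,dV_g$ (which is our hypothesis \eqref{eq:jcompat}), and the reconstruction applied to $(\phi, W, \tilde Q \equiv 0)$ recovers the same $(\ol g, \ol K)$. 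Finally, with $V = 0$ and $\tilde Q = 0$ the expansion term in \eqref{eq:driftV2} collapses to $\kappa \tau_*^2 \phi^{q-1}$ and the momentum equation loses its $\div_\phi^*$ contribution, leaving precisely \eqref{eq:CTS-H} at the constant mean curvature $\tau \equiv \tau_*$ (for which $d\tau_* \equiv 0$). I foresee no real obstacle: once the volumetric drift of the reconstructed solution is correctly identified as $\drift[\ol g]{\phi^{2-q}V + Q}$ and the scaling laws $\ol j = \phi^{-q} j$, $dV_{\ol g} = \phi^q dV_g$ are applied, the remainder is bookkeeping.
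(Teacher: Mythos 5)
Your proof is correct and follows essentially the same route as the paper: both reduce the ``if and only if'' to Lemma \ref{lem:driftcmc} after observing that $V\equiv 0$ forces zero volumetric drift (the $Q$ in the representative $\phi^{2-q}V+Q$ being killed because $Q\in\Ker\ck_{\ol g}$) and that $\ol j\,dV_{\ol g}=j\,dV_g$, and then collapse the CED-V system to CTS-H once $\div_\phi Q=0$ is known. The only variation is in that last sub-step: you derive $\div_\phi Q=0$ directly from the constancy of $\tau$ in the reconstruction formula (which needs the one-line remark, via the uniqueness clause of Lemma \ref{lem:yorksplitvol} or by integrating $\div_{\ol g}Q$ over $M$, that the constant value of $\tau$ must equal $\tau_*$), whereas the paper extracts it from the CKF compatibility condition \eqref{eq:ckfcompat-driftV} with $P=Q$; both are valid.
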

\begin{proof}
Let $(\phi,W,Q)$ be the solution of \eqref{eq:driftV2} corresponding to a solution $(\ol g, \ol K)$ of the
constraints. Since $V=0$ the volumetric drift of the solution is $\drift[\ol g]{\phi^{2-q}V}=0$, and Lemma \ref{lem:driftcmc} 
implies that the solution is CMC if and only if 
\begin{equation}\label{eq:jcompatinproof}
\int \ol j_a P^a dV_{\ol g} = 0
\end{equation}
for all conformal Killing fields $P$; here $\ol j$ is the physical momentum density given by 
\begin{equation}
\ol j = \phi^{-q} j.
\end{equation}
The physical volume form is $dV_{\ol g} = \phi^q dV_g$, and hence $\ol j dV_{\ol g} = j dV_g$.  So
\eqref{eq:jcompatinproof} holds for all conformal Killing fields $P$ if and only if the same is true for \eqref{eq:jcompat}.

Supposing now that the solution is CMC, equation \eqref{eq:jcompat} along with the choice $V\equiv 0$ implies that 
the CKF compatibility condition \eqref{eq:ckfcompat-driftV} reduces to
\begin{equation}
\int \frac{1}{N}(\div_\phi Q)(\div_\phi P) \; dV_g = 0
\end{equation}
for all conformal Killing fields $P$.  In particular, $\int (\div_\phi Q)^2N^{-1}\, dV_g=0$. But $\div_\phi = \div_{\ol g}$,
so $Q$ is a divergence-free conformal Killing field for $\ol g$, i.e. it is a true Killing field for $\ol g$.
Since $Q$ appears in the CED-V equations only via $\div_\phi Q$, we may as well take it to be zero and arrive at 
the same solution of the constraints. Finally, since $V\equiv 0$ as well, a quick inspection verifies that
the CED-V equations \eqref{eq:driftV2}  reduce to the CTS-H equations \eqref{eq:CTS-H} with $\tau\equiv \tau_*$.
\end{proof}

Corollary \ref{cor:driftCMC} shows that the CMC theory for the CTS-H equations
transfers directly to the CED-V equations.  In vacuum we have the tidy
and complete classification of CMC solutions completed in \cite{Isenberg:1995bi}.
Conversely, for a positive cosmological constant $\Lambda$
with $2\Lambda>\kappa\tau_*^2$ we have all the complexity demonstrated
in, e.g, \cite{Chrusciel:2015tk}.

\subsection{Near-CMC Solutions Parametrized by Small Volumetric Drift}

Given the natural connection between CMC solutions and volumetric drift, we first examine the construction 
of near-CMC solutions by perturbing to small volumetric drift.  We use the implicit function theorem in a fashion parallel
to that of \cite{Gicquaud:2014bu}, but with some technical features to handle conformal Killing fields.  Indeed,
in the presence of conformal Killing fields, the vector Laplacian $\frac{1}{2}\ck^*\frac{1}{2N}\ck: W^{k,p}(M,TM) \to W^{k-2,p}(M,T^*M)$
is Fredholm and has kernel equal to $\calQ$ and cokernel 
\[
\mathcal Q^\perp := \left\{\eta\in W^{k-2,p}(M,T^*M): \int_M \eta_a Q^a\;dV_g=0\text{ for all conformal Killing fields $Q$}\right\}.
\]
By modifying the domain and range slightly we can make this into an isomorphism
\begin{equation}\label{lem:divck}
\ck^* \frac{1}{2N} \ck :  W^{k,p}(M,TM)/\calQ \ra W^{k-2,p}(M,T^*M)\cap \calQ^\perp.
\end{equation}
Indeed, writing $[W]_{\calQ}$ for the projection of a $W^{k,p}$ vector field $W$ to the quotient space $W^{k,p}(M,TM)/\calQ$,
it is clear that $\ck[W]_\calQ =\ck W$ is well-defined.  
We also set
\begin{equation}
\calP: W^{k-2,p}(M,T^*M)\ra W^{k-2,p}(M,T^*M)\cap \calQ^\perp,
\label{lem:projectQ}
\end{equation}
to be the projection with kernel consisting of the conformal Killing covector fields; this is defined since 
$W^{k-2,p}(M,T^*M)\cap \calQ^\perp$ has finite codimension in $W^{k-2,p}(M,T^*M)$.

For the remainder of this section, fix a metric $g$, a lapse $N$, and a conformal matter distribution $(\rho(\cdot), j, \Lambda)$.
Before setting up an implicit function theorem argument, we define the following three functionals: 
\begin{itemize}
\item the Hamiltonian constraint
\begin{equation}\label{eq:CH}
\begin{aligned}
\MoveEqLeft C_H(\sigma, \tau_*, V;\;\phi,[W]_{\calQ},Q) = &\\
&-a\Lap \phi + R \phi - \left|\sigma + \frac{1}{2N}\ck [W]_{\calQ} \right|^2 \phi^{-q-1} 
+ \kappa \left(\tau_* + \frac{1}{N\phi^q}\div_\phi(\phi^{q-2}V + Q) \right)^2 \phi^{q-1} \\ & -
2(8\pi\rho(\phi) +\Lambda) \phi^{q-1};
\end{aligned}
\end{equation}
\item
the momentum constraint
\begin{equation}\label{eq:CM}
C_M(\sigma, \tau_*, V;\;\phi,[W]_{\calQ},Q) =
\frac{1}{2}\ck^*\frac{1}{2N}\ck [W]_{\calQ} - \calP\left[ \kappa \div_\phi^* \left( \frac{1}{N}\div_\phi(\phi^{q-2}V + Q)\right) +8\pi j\right]
\end{equation}
where $\calP$ is the projection \eqref{lem:projectQ};
\item the CKF compatability constraint
\begin{equation}\label{eq:CC}
C_C(\sigma, \tau_*, V;\;\phi,[W]_{\calQ},Q) =
P\mapsto \int\kappa\frac{1}{N} \div_\phi(\phi^{q-2}V+Q)\div_\phi(P) + 8\pi j_aP^a\;dV_g,
\end{equation}
where $P$ is an arbitrary conformal Killing field.  
\end{itemize}
There is a semicolon appearing in the arguments of these maps to separate those variables that are prescribed, 
the following spaces:
\begin{equation}\label{eq:Fdomain1}
(\sigma,\tau_*,V) \in [\ker \ck^*\subseteq W^{k-1,p}(M,S_2 M)] \times \Reals \times [\mathcal D_g\subseteq W^{k,p}(M,TM)],
\end{equation}
versus those that must be solved for,
\begin{equation}\label{eq:Fdomain2}
(\phi,[W]_\calQ,Q) \in W^{k,p}_+(M)\times (W^{k,p}(M,TM)/\calQ)\times 
\mathcal Q.
\end{equation}
The maps $C_H$, $C_M$, and $C_C$ take their values in $W^{k-2,p}(M)$, $W^{k-2,p}(M,T^*M)\cap\calQ^\perp$ and $\calQ^*$, respectively.

\begin{lemma}\label{lem:IFT-CEDV}
A triple $(\phi, W,Q)$  solves the CED-V equations
\eqref{eq:driftV2} for 
CED-V data $(g,\sigma,\tau_*,V;\;N)$
and conformal matter distribution $(\rho(\cdot),j,\Lambda)$ if and only if
\begin{equation}\label{eq:allzero}
\begin{aligned}
C_H(\phi,[W]_{\calQ},Q;\; \sigma, \tau_*, V) &= 0 \\
C_M(\phi,[W]_{\calQ},Q;\; \sigma, \tau_*, V) &= 0 \\
C_C(\phi,[W]_{\calQ},Q;\; \sigma, \tau_*, V) &= 0.
\end{aligned}
\end{equation}
\end{lemma}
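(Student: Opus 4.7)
The plan is to show that the three functional equations in \eqref{eq:allzero} assemble into exactly the CED-V system \eqref{eq:driftV2} together with the compatibility condition \eqref{eq:ckfcompat-driftV}. This is essentially a bookkeeping exercise: for $C_H$ and $C_C$ the identification is immediate from the definitions, while for $C_M$ there is one subtle point about the projection $\calP$ in \eqref{eq:CM} that requires coupling with $C_C = 0$.

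First I would handle $C_H$: unwinding \eqref{eq:CH} and moving the matter terms to one side shows that $C_H = 0$ is literally the Hamiltonian equation of \eqref{eq:driftV2}. The passage from $W$ to $[W]_\calQ$ is legitimate because $\ck$ vanishes on $\calQ$, so $\ck [W]_\calQ$ is well-defined as an element of $W^{k-1,p}(M, S_2 M)$ and agrees with $\ck W$ for any representative. Next, $C_C = 0$ as defined in \eqref{eq:CC} is precisely the CKF compatibility condition \eqref{eq:ckfcompat-driftV} rewritten so that both sides sit together, with no integration by parts required.

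The only step demanding care is matching $C_M = 0$ with the momentum equation of \eqref{eq:driftV2}. The key observation is that $\frac{1}{2}\ck^* \frac{1}{2N}\ck W$ always lies in $\calQ^\perp$, since $\ck^*$ is the adjoint of $\ck$ and $\ck P = 0$ for every $P \in \calQ$. Consequently, if the full CED-V momentum equation holds, pairing both sides with any $P \in \calQ$ forces the right-hand side $\kappa \div_\phi^*\bigl(\tfrac{1}{N}\div_\phi(\phi^{2-q}V + Q)\bigr) + 8\pi j$ to also lie in $\calQ^\perp$; integrating by parts using the adjoint relation $\div_\phi^* = -\phi^q \extd \phi^{-q}$ shows that this pairing condition is precisely $C_C = 0$. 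Granted $C_C = 0$, the projection $\calP$ acts as the identity on the right-hand side, so $C_M = 0$ becomes equivalent to the unprojected momentum equation of \eqref{eq:driftV2}.

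For the converse, if $(\phi, W, Q)$ solves \eqref{eq:driftV2} together with \eqref{eq:ckfcompat-driftV}, then all three of $C_H, C_M, C_C$ vanish: $C_H = 0$ and $C_C = 0$ by direct translation, and $C_M = 0$ because the momentum equation already lies in $\calQ^\perp$, so projecting by $\calP$ leaves it unchanged. I expect the only mild obstacle is keeping track of signs and the various powers of $\phi$ hidden in $\div_\phi$ and $\div_\phi^*$ when carrying out the adjoint computation used to identify $C_C$ with the pairing of the momentum equation against conformal Killing fields; everything else is straightforward manipulation of the definitions.
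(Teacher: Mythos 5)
Your proposal is correct and follows essentially the same route as the paper's proof: the only nontrivial point is the projection $\calP$ in $C_M$, and you resolve it exactly as the paper does, by using $C_C=0$ together with integration by parts (via $\div_\phi^* = -\phi^q\,\extd\,\phi^{-q}$ being the adjoint of $\div_\phi$) to show the bracketed term already lies in $\calQ^\perp$, so $\calP$ acts as the identity. Your additional observation that $\tfrac{1}{2}\ck^*\tfrac{1}{2N}\ck W$ pairs to zero against every conformal Killing field, so that the unprojected momentum equation by itself forces the compatibility condition, is a harmless and correct refinement of the same argument.
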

\begin{proof}
If the definition of $C_M$ were not to involve the projection $\calP$ there would be nothing to do other than to observe 
that the distinction between $W$ and $[W]_{\calQ}$ is immaterial since $W$ only appears as an argument to $\ck$.
Hence it suffices to show that if $C_C=0$ then $C_M=0$ is equivalent to
\begin{equation}\label{eq:CM2}
\frac{1}{2}\ck^*\frac{1}{2N}\ck [W] - \div_\phi^* \left( \frac{1}{N}\div_\phi(V + Q)\right) -8\pi j = 0.
\end{equation}

Indeed, if $C_C=0$, then integration by parts shows that
\begin{equation}
-\div_\phi^* \left( \frac1{N} \div_\phi V\right) - 8\pi j \in \calQ^\perp.
\end{equation}
Hence
\begin{equation}
\calP\left[ \div_\phi^* \left( \frac{1}{N}\div_\phi(V + Q)\right) + 8\pi j\right] = 
\div_\phi^* \left( \frac{1}{N}\div_\phi(V + Q)\right) + 8\pi j,
\end{equation}
which is \eqref{eq:CM2}.
\end{proof}

\begin{theorem}\label{thm:smalldrift-conf}  
Consider volumetric drift parameters $(g,\hat \sigma,\hat \tau_*, \hat V;\; N)$ and and a conformal matter distribution $(\rho(\cdot),j,\Lambda)$ where $g$, $N$, and $\hat V$ have $W^{k,p}$ regularity, $\hat \sigma$ is of class $W^{k-1,p}$, $j$ is of
class $W^{k-2,p}$, and where $\rho$ satisfies the
energy scaling condition of Definition \ref{def:rhoscale}.

Suppose that $\hat V\equiv 0$ leads to a CMC
solution of equations \eqref{eq:driftV2} and additionally that
\begin{enumerate}
\item[i)] The CMC solution metric does not admit any true Killing fields.
\item[ii)] $\kappa \tau_*^2 \ge 2\Lambda$.
\item[iii)] Either $\kappa \tau_*^2 > 2\Lambda$, or $\sigma\not\equiv 0$, or the solution is not vacuum.\label{hyp:noscaling}
\end{enumerate}
Then there exists $\epsilon>0$ such that all conformal data $(g,\sigma,\tau_*,V;\; N)$ satisfying
\begin{equation}
||\sigma-\hat\sigma||_{W^{k-1,p}}+|\tau_*-\hat\tau_*| + ||V||_{W^{k,p}} < \epsilon
\end{equation}
generate a solution of system \eqref{eq:driftV2}, and the map from $(\sigma,\tau_*,V)$ to the associated 
solution of the constraint equations is smooth and injective.
\end{theorem}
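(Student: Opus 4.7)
The plan is to apply the implicit function theorem to the map
\[
F(\sigma,\tau_*,V;\,\phi,[W]_\calQ,Q) = \bigl(C_H,\,C_M,\,C_C\bigr)
\]
of Lemma \ref{lem:IFT-CEDV}, regarded as a smooth map between the Banach spaces \eqref{eq:Fdomain1}--\eqref{eq:Fdomain2}. Smoothness of $F$ follows from standard Sobolev multiplier estimates (the hypothesis $\tfrac{1}{p} - \tfrac{k-1}{n} < 0$ with $k\ge 2$ makes $W^{k,p}$ a Banach algebra) and from smoothness of the matter map $\rho$ granted by Definition \ref{def:rhoscale}. Zeros of $F$ correspond to solutions of \eqref{eq:driftV2}, so the theorem reduces to showing that the partial derivative $D_{(\phi,[W]_\calQ,Q)}F$ at the CMC base point is a topological isomorphism, supplemented by a separate argument for injectivity.

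By Corollary \ref{cor:driftCMC} together with hypothesis (i), the CMC solution at $\hat V\equiv 0$ takes the form $(\phi_0,W_0,0)$ with $(\phi_0,W_0)$ solving the ordinary CTS-H equations. The key simplification at this base point is that $V=0$ and $Q=0$ make $\div_\phi(\phi^{q-2}V+Q)$ vanish identically, as does its first $\phi$-variation (every term in that variation still carries a factor of $V$ or $Q$). An inspection of \eqref{eq:CM} and \eqref{eq:CC} then shows that $D_\phi C_M = 0$ and $D_\phi C_C = 0$ at the base point, so the linearization takes the upper triangular block form
\[
DF \;=\; \begin{pmatrix} D_\phi C_H & D_{[W]_\calQ} C_H & D_Q C_H \\ 0 & \tfrac12\ck^*\tfrac{1}{2N}\ck & D_Q C_M \\ 0 & 0 & D_Q C_C \end{pmatrix},
\]
and invertibility reduces to invertibility of the three diagonal blocks.

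The top block $D_\phi C_H$ is the standard linearization of the Hamiltonian constraint at the CMC CTS-H solution; under hypotheses (ii) and (iii) combined with the energy scaling condition of Definition \ref{def:rhoscale}, its zeroth-order coefficient has the sign needed to make it an isomorphism $W^{k,p}(M)\to W^{k-2,p}(M)$, by the classical CMC argument (cf.\ \cite{Isenberg:1995bi} and its matter extension \cite{Isenberg:2005we}). The middle block is an isomorphism by \eqref{lem:divck}. For the bottom block, $D_Q C_C$ sends $Q'\in\calQ$ to the functional $P\mapsto \int_M \kappa N^{-1}(\div_\phi Q')(\div_\phi P)\,dV_g$; the associated symmetric bilinear form on $\calQ\times\calQ$ equals $\int_M \kappa N^{-1}(\div_\phi Q')^2\,dV_g$ on the diagonal, which vanishes precisely when $Q'$ is $\ol g$-divergence-free, i.e.\ a true Killing field of $\ol g$. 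Hypothesis (i) rules this out for $Q'\neq 0$, so the form is positive definite on the finite-dimensional space $\calQ$ and $D_Q C_C$ is an isomorphism onto $\calQ^*$.

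Once $DF$ is invertible, the implicit function theorem produces the smooth local parametrization of solutions by $(\sigma,\tau_*,V)$. Injectivity is a separate and easy consequence of the canonical nature of the decompositions: $(\sigma,\tau_*,V)$ is recovered from $(\ol g,\ol K)$ via Definitions \ref{def:confmommeas}, \ref{def:volmommeas}, and \ref{def:driftmeas} applied with respect to the fixed conformal representative $g$ and lapse $N$, and Lemma \ref{lem:driftrep} ensures uniqueness of the canonical representative $V\in \mathcal D_g$. The main obstacle to executing the plan is the bookkeeping that produces the triangular structure of $DF$—in particular the vanishing of $D_\phi C_M$ and $D_\phi C_C$ at the base point; after that the positivity argument for $D_Q C_C$ is the clean point at which the absence of true Killing fields becomes decisive.
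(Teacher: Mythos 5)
Your proposal is correct and follows essentially the same route as the paper: the implicit function theorem applied to $(C_H,C_M,C_C)$, the block upper-triangular linearization at the CMC base point (with $Q\equiv 0$ forced by Corollary \ref{cor:driftCMC} and hypothesis (i)), invertibility of the three diagonal blocks, and recovery of $(\sigma,\tau_*,V)$ via the uniqueness of the splittings and Lemma \ref{lem:driftrep} for injectivity. The only place the paper does noticeably more work than you indicate is in verifying $A\not\equiv 0$ for the top block, where it uses the relation $-\ck^*\frac{1}{2N}\ck\hat W=j$ to show that $A\equiv 0$ would force the excluded degenerate case of hypothesis (iii).
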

\begin{remark}
As we show below, in the absence of matter fields, hypothesis i) is satisfied generically in the space of CMC solutions. 
\end{remark}
\begin{proof}
By conformal covariance of the CED-V equations, assume that the background metric $g$ is the CMC solution metric,
which means that the solution of the CED-V equations is $(\hat\phi,\hat W,\hat Q)$ with $\hat\phi\equiv 1$.  
Moreover Corollary \ref{cor:driftCMC} implies $\hat Q$ must be a true Killing field, hence $\hat Q\equiv 0$.  This 
simplifies various expressions later in the proof.  Although $\hat W$ does not have a simple expression, the momentum constraint implies
\begin{equation}\label{eq:jzerotoWzero}
-\ck^*\frac{1}{2N} \ck \hat W = j,
\end{equation}
which we also use in the sequel.

Define $\mathcal F= (C_H,  C_M,  C_C)$. 
By Lemma \ref{lem:driftrep} there is a neighborhood $\Phi$ of $1$ in $W^{k,p}_+(M)$ such that 
$V\mapsto \drift[\phi^{q-2}\hat g]{\phi^{2-q}V}$, from $\mathcal D_{\hat g}$ to $\Drift_{\phi^{q-2}\hat g}$,
is an isomorphism for any $\phi\in \Phi$.  We restrict the domain of $\mathcal F$ to these conformal factors;
it remains an open set in the Banach space \eqref{eq:Fdomain1}, \eqref{eq:Fdomain2}.

The map $\mathcal F$ is continuously differentiable and its derivative with respect to 
$(\phi,[W]_{\calQ},Q)$ at 
\begin{equation}
(\hat \sigma,\hat \tau_*, \hat V; \hat \phi, [\hat W]_\calQ, \hat Q) = 
(\hat \sigma,\hat \tau_*, 0; 1, [\hat W]_\calQ, 0)
\end{equation}
can be written as 
\begin{equation}\label{eq:linearizecpct}
DF(\delta\phi, \delta [W]_{\calQ}, \delta Q) = 
\begin{pmatrix}
-a\Lap + A & 
- 2\ip< \sigma+ \frac{1}{2N}\ck \hat W, \frac{1}{2N}\ck (\cdot) > & 2\kappa \tau_*\frac{1}{N} \div ( \cdot )  \\
0 & \frac{1}{2}\ck ^* \left(\frac{1}{2N} \ck (\cdot )\right) & \kappa\calP( \div^* \left( \frac{1}{N} \div (\cdot ) \right))  \\
0 & 0 & P \mapsto \kappa\int \frac{1}{N} \div( \cdot ) \div(P)\; dV 
\end{pmatrix}
\begin{pmatrix} \delta\phi \\ \delta [W]_\calQ \\ \delta Q \end{pmatrix}
\end{equation}
where 
\begin{equation}\label{eq:A}
A =  (q+2)\left|\sigma+\frac{1}{2N}\ck \hat W\right|^2 +(q-2)[\kappa\tau_*^2-2\Lambda]-16\pi[\rho'(1)+ (q-2)\rho(1)].
\end{equation}
Note that we have used the Hamiltonian constraint 
\begin{equation}\label{}
R - \left|\sigma+\frac{1}{2N}\ck \hat W\right|^2 + \kappa \tau_*^2 = 16\pi\rho(1)+2\Lambda
\end{equation}
to replace the scalar curvature that would otherwise have appeared in the expression for $A$.
From the block upper-triangular form of the matrix we conclude that $DF$ is invertible if 
each diagonal block is, and we treat each in turn. 

The operator 
\begin{equation}
-a\Lap + A  : W^{k,p}(M) \ra W^{k-2,p}(M),
\end{equation}
is invertible if $A\ge 0$, $A\not \equiv 0$.  Looking at the expression \eqref{eq:A} we have three terms to consider. First,
\begin{equation}
(q-2)[\kappa\tau_*^2-2\Lambda] \geq 0
\end{equation}
since $q>2$ (for any $n\ge 3$) and since $\kappa\tau^2_*\ge 2\Lambda$ by hypothesis. Next,
\begin{equation}
[\rho'(1)+ (q-2)\rho(1)]
\end{equation}
is the linearization of
\begin{equation}
\phi\mapsto \phi^{q-2}\rho(\phi)
\end{equation}
evaluated at $\phi\equiv1$.  By Definition \ref{def:rhoscale},  this is non-positive and hence $-16\pi[\rho'(1)+ (q-2)\rho(1)]\ge 0$.
The final summand of $A$ is 
\begin{equation}\label{eq:confKE}
\left| \sigma + \frac{1}{2N}\ck \hat W \right|^2,
\end{equation}
which is obviously nonnegative. Moreover, multiplying expression \eqref{eq:confKE} by $N$ and integrating yields
\begin{equation}
\int N \left| \sigma + \frac{1}{2N}\ck \hat W \right|^2
=\int N |\sigma|^2 + \frac{1}{4N}|\ck \hat W|^2,
\end{equation}
using that transverse-traceless tensors are $L^2$ orthogonal to the image of $\ck$.   Altogether, 
$A\equiv 0$ means that 
\begin{equation}
 \kappa\tau_*^2=2\Lambda,\ \ \sigma\equiv 0,\ \ \hat W\equiv 0,\ \mbox{and}\ \  \rho'(1)+(q-2)\rho(1)\equiv 0
\end{equation}
Equation \eqref{eq:jzerotoWzero} shows that $\hat W\equiv 0$ implies $j\equiv 0$; by Definition \ref{def:rhoscale}, 
if $\rho'(1)+(q-2)\rho(1)\equiv 0$ then $\rho(\cdot)\equiv 0$. From these we get that $\kappa\tau_*^2=2\Lambda$,
$\sigma\equiv 0$ and the solution is vacuum. Hypothesis \ref{hyp:noscaling} thus ensures that $A\not\equiv 0$.

The middle block of the matrix in \eqref{eq:linearizecpct} is invertible by the discussion around \eqref{lem:divck}.

Finally, for the last block, the symmetric bilinear form
\begin{equation}
B: \calQ \times \calQ \to \RR,\quad B(Q,P) = \int \frac{1}{N} \div Q \div P \; dV_g
\end{equation}
is nonnegative, and positive definite so long as $\calQ$ contains no true Killing fields, which are precisely
the divergence free elements in $\calQ$. This too holds under our assumptions. Therefore, the map 
\begin{equation}
Q\mapsto \int \frac{1}{N} \div Q \div (\cdot) \; dV_g 
\end{equation}
is an isomorphism from $\calQ$ to $\calQ^*$.

Taking Lemma \ref{lem:IFT-CEDV} into account, the implicit function theorem now provides the existence of the 
solution map for $(\sigma,\tau_*,V)$ sufficiently near $(\hat \sigma,\hat \tau_*,\hat V=0)$ 
in $W^{k,p}\times R\times \mathcal D_g$. It remains to establish the global injectivity.

Suppose $(\sigma,\tau_*,V)$ determines a solution $(\phi, W, Q)$ of the CED-V equations, and thereby
a solution $(\overline g,K)$ of the constraint equations. We demonstrate injectivity by showing that 
we can recover $(\sigma,\tau_*,V)$ from $(\overline g,K)$ under the hypothesis that $\phi\in \Phi$.

Setting $\overline N = \phi^{q} N$, apply Lemmas \ref{lem:yorksplit} and \ref{lem:yorksplitvol} to write
\[
K = \overline \sigma + \frac{1}{2\overline N} \ck_{\overline g} \overline W + \frac{\tau}{n}\overline g,\ \quad
\tau = \overline \tau_* + \frac{1}{\overline N}\div_{\overline g}(\overline V)
\]
where $\overline \sigma$ is transverse-traceless with respect to $\overline g$, $\overline \tau_*$ is constant, and $\overline V$
is a vector field.  On the other hand equations \eqref{eq:CEDV-reconstruct} and the conformal transformation laws for the 
divergence and conformal Killing operators imply
\[
K = \phi^{-2}\sigma + \frac{1}{2\overline N} \ck_{\overline g} 
W + \frac{\tau}{n}\overline g,\ \quad  
\tau = \tau_* + \frac{1}{\overline N}\div_{\overline g}(\phi^{2-q} V + Q).
\]
Since $\phi^{-2}\sigma$ is transverse-traceless with respect to $\overline g$, the uniqueness clauses of Lemmas
\ref{lem:yorksplit} and
\ref{lem:yorksplitvol} imply $\tau_*=\overline \tau_* $,
$\sigma=\phi^{2}\overline\sigma$ and that there
is a $\overline g$ divergence-free vector field $E$ such that
\[
\phi^{2-q} V + Q + E = \overline V.
\]
But this shows that we have agreement of drifts
\[
\drift[\ol g]{\phi^{2-q} V} = \drift[\ol g]{\ol V}.
\]
Since $\phi \in \Phi$,  the map $\mathcal D_{g} \rightarrow \Drift_{\overline g}$ given by $V\mapsto \drift[\ol g]{\phi^{2-q} V}$
is an isomorphism and $V\in\mathcal D_g$  is uniquely determined by $\overline V$.
\end{proof}



Proposition \ref{prop:CKFs-matter-sometimes} shows that given a CMC solution of vacuum constraint equations with a metric 
conformal to the round sphere, there exist inadmissible perturbations of the mean curvature.  By contrast,  Theorem \ref{thm:smalldrift-conf} 
shows that, so long as the CMC solution has no Killing fields, arbitrary small perturbations of drift and volumetric momentum 
produce nearby solutions. We now verify that this condition is generic among the CMC solutions within a conformal class.

\begin{proposition}\label{prop:genericity}
In the space of all CMC solutions to the vacuum constraint equations, the subset of pairs $(g,K)$ for which there 
are no Killing fields is open and dense. In fact, this is true even within a conformal class.
\end{proposition}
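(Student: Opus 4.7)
The plan is to establish the stronger in-conformal-class version; the global statement then follows because the topology on CMC solutions restricts to the one within a conformal class.

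For openness, I use the upper semicontinuity of the dimension of the Killing algebra on the space of $W^{k,p}$ metrics. Indeed, if $g_n\to g$ and each $g_n$ admits a Killing field $X_n$ of unit $W^{k,p}$ norm, then elliptic regularity applied to the Killing equation $\nabla^{g_n}_{(a}(X_n)_{b)}=0$ yields, after passing to a subsequence, convergence to a nontrivial Killing field of $g$. Since the map $(\ol g,\ol K)\mapsto \ol g$ is continuous, the set of CMC solutions whose metric has trivial Killing algebra is open.

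For density, fix a conformal class and a representative $g$. By the CMC specialization of the CTS-H equations, vacuum CMC solutions in $[g]$ are parametrized by pairs $(\sigma,\tau_*)$ with $\sigma$ $g$-transverse-traceless and $\tau_*\in\Reals$, yielding solution metric $\ol g_\sigma=\phi_\sigma^{q-2}g$ via the standard Hamiltonian constraint with $W\equiv 0$. The conformal Killing algebra $\mathcal Q$ is conformally invariant, and a direct computation using $dV_{\ol g}=\phi^q dV_g$ gives $\div_{\ol g_\sigma} Q = \div_g Q + qQ(\log\phi_\sigma)$, so $Q\in\mathcal Q$ is a Killing field of $\ol g_\sigma$ if and only if $\chi_Q(\phi_\sigma):=\div_g Q + qQ(\log\phi_\sigma)\equiv 0$. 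The linear map $\Xi_\sigma\colon\mathcal Q\to W^{k-1,p}(M)$ defined by $Q\mapsto\chi_Q(\phi_\sigma)$ depends continuously on $\sigma$, and its kernel equals the Killing algebra of $\ol g_\sigma$; finite-dimensionality of $\mathcal Q$ then forces $\dim\ker\Xi_\sigma$ to be upper semicontinuous, so the bad set $B=\{(\sigma,\tau_*):\ker\Xi_\sigma\ne 0\}$ is closed.

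To show $B$ has empty interior, I iterate on $\dim\ker\Xi_\sigma$. A basis argument together with the continuity of $\Xi_\sigma$ shows that for $\sigma_\epsilon$ close to $\sigma_0$ one always has $\ker\Xi_{\sigma_\epsilon}\subseteq\ker\Xi_{\sigma_0}$, so any perturbation that ejects even a single Killing field $Q$ from the kernel strictly decreases its dimension. It therefore suffices to show: given a CMC solution with $\ol g_0=\phi_0^{q-2}g$ admitting a nontrivial Killing field $Q$, there is an arbitrarily small TT perturbation of $\sigma_0$ for which $Q$ is no longer a Killing field of the new solution. By conformal covariance I take $g=\ol g_0$, so $\phi_0\equiv 1$ and $Q$ is a Killing field of $g$. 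Writing $\sigma_\epsilon=\sigma_0+\epsilon\dot\sigma$, the linearized Hamiltonian constraint reads $L\dot\phi=2\langle\sigma_0,\dot\sigma\rangle$, with $L$ the invertible linearized operator. The first-order variation of $\chi_Q(\phi_\epsilon)$ is $qQ(\dot\phi)$; since $Q$ is Killing for $g$ it commutes with $L$, so $Q(\dot\phi)=L^{-1}(2Q\langle\sigma_0,\dot\sigma\rangle)$, and it suffices to find a TT $\dot\sigma$ with $Q\langle\sigma_0,\dot\sigma\rangle\not\equiv 0$. When $\sigma_0\not\equiv 0$, I choose $\dot\sigma$ supported near a point where $\sigma_0\ne 0$ and arranged so that $\langle\sigma_0,\dot\sigma\rangle$ has nontrivial $Q$-derivative; when $\sigma_0\equiv 0$, the first-order variation of $\phi$ vanishes and I pass to the second-order equation $L\ddot\phi=-2|\dot\sigma|^2$, choosing TT $\dot\sigma$ with $Q|\dot\sigma|^2\not\equiv 0$.

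The main obstacle is verifying the freedom to perturb among TT tensors. Since $\mathcal Q$ is finite-dimensional while the space of TT tensors is infinite-dimensional (via the York decomposition of Lemma~\ref{lem:yorksplit}), the closed subspace of $Q$-invariant TT tensors is proper, so generic TT perturbations fail to be $Q$-invariant, and their pointwise pairings with $\sigma_0$ (or their squared norms) transmit this asymmetry into a scalar with nonvanishing $Q$-derivative. Combining this with the iteration above produces a CMC solution arbitrarily close to the original with trivial Killing algebra, completing the density claim within the conformal class.
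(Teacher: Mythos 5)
Your overall strategy---perturb $\sigma$ within the conformal class, track the condition $\div_{\ol g_\sigma}Q=\div_g Q+qQ(\log\phi_\sigma)=0$, and use the linearized Lichnerowicz equation together with $[Q,L]=0$ to show a suitable TT perturbation destroys the Killing condition---is the same broad approach as the paper's, and your openness argument is fine. But the density argument has two genuine gaps, and they are precisely the two points where the paper's proof does real work.

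First, the claim that $\ker\Xi_{\sigma_\epsilon}\subseteq\ker\Xi_{\sigma_0}$ for $\sigma_\epsilon$ near $\sigma_0$ is false. For a continuous family of linear maps on the finite-dimensional space $\calQ$, the kernel dimension is upper semicontinuous, but the kernel itself can rotate: nearby metrics in the conformal class can have Killing algebras of the same dimension that are \emph{different} subspaces of $\calQ$ (think of conformal factors on the round sphere invariant under different rotation subgroups). Consequently, ejecting a particular $Q$ from the kernel does not force $\dim\ker\Xi_{\sigma_\epsilon}<\dim\ker\Xi_{\sigma_0}$, and your iteration has no termination guarantee: you could remain forever on a stratum where the Killing algebra has constant positive dimension but varies. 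This is exactly the scenario the paper confronts with the analytic stratification of the rank sets $\calF_j$, the Grassmannian map $G$, and the Sard--Smale theorem, which produce a finite-codimension submanifold $\hat Z$ on which the nullspace is a \emph{fixed} $k$-plane; only then can the linearized computation be run to a contradiction. Second, your resolution of the ``main obstacle'' does not close it: the fact that a generic TT tensor $\dot\sigma$ fails to be $Q$-invariant does not imply that the scalar $\langle\sigma_0,\dot\sigma\rangle$ has nonvanishing $Q$-derivative (the pointwise pairing with the fixed $\sigma_0$ can be $Q$-invariant, or identically zero, even when $\dot\sigma$ is not $Q$-invariant). Producing a TT $\dot\sigma$ with $Q\langle\sigma_0,\dot\sigma\rangle\not\equiv0$ requires an actual construction; the paper uses the surjectivity of $\div_{g}$ on weighted spaces (\cite{Delay:2012}, \cite{Mazzeo-edge}) to build TT tensors supported in arbitrarily small open sets that agree with $\sigma_0$ on a smaller set, so that $\langle\sigma_0,\dot\sigma\rangle$ equals $|\sigma_0|^2>0$ at one point of an integral curve of the Killing field and $0$ at another. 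Both steps need to be supplied before your argument is complete; the remaining ingredients (the linearization $L\dot\phi=2\langle\sigma_0,\dot\sigma\rangle$, the commutation $[Q,L]=0$ via the Hamiltonian constraint, and the invertibility of $L$) match the paper.
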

\begin{proof}
Let $(g,K)$ be any CMC solution and denote by $\calK_g$ and $\calQ_g$ the spaces of Killing and conformal Killing vector 
fields for $g$, respectively; thus
\[
\calQ_g = \left\{X: \ck X = 0\right\}, \quad \mbox{and} \qquad \calK_g = \left\{ X \in \calQ_g: \div_g X = 0\right\};
\]
of course $\calQ_{\tilde{g}} = \calQ_g$ for any metric $g' = \phi^{q-2}g$.

We first show that if $\calK_g = \{0\}$, then the same is true for any metric $g'$ near to $g$ in the $W^{k,p}$ topology. 
The second part is to prove that if $K_g$ is nontrivial, then there exist metrics $g'$ arbitrarily near $g$ in the
$W^{k,p}$ topology such that $\calK_{g'} = \{0\}$. 

To begin, observe that the $\calK_g$ is also characterized as the nullspace of the map 
\[
T_g: \calQ_g \longrightarrow \calQ_g, \ \ T_g \xi = \mathbb P \circ \div_g^* \circ \div_g,
\]
where $\mathbb P$ is the $L^2$ orthogonal projection from the space of symmetric two-tensors onto the finite
dimensional space $\calQ_g$; this follows easily from the identity $0 = \langle T_g \xi, \xi \rangle = 
|| \div_g \xi ||^2$ if $T_g \xi = 0$ and $\xi \in \calQ_g$. We henceforth
identify $\calQ_g$ with $\RR^N$ for some $N$. Observe also that $T_g$ depends in a real analytic way on $g$. 

For the first assertion, simply note that if $g$ admits no Killing fields, then $\ker T_g = \{0\}$, and this is an 
open condition in the space of all $W^{k,p}$ metrics, hence also in the space of metrics $g'$ which appear in a 
pair $(g',K')$ of CMC solutions of the constraint equations. 

As for the second assertion, suppose $\calK_{g_0}$ is nontrivial for some metric $g_0$ which appears in a CMC solution pair 
$(g_0, K_0 = \frac{\tau}{n} g_0 + \sigma_0)$. Without loss of generality we can assume that $\tau\neq 0$ and 
$\sigma_0\not\equiv 0$, for otherwise the CMC theory of the conformal method ensures we can
perturb to a nearby solution of the constraint equations satisfying this condition.
We consider families of solutions which arise by varying $\sigma$ in $\calU=W^{k,p}(M,S_{\rm tt})\setminus\{0\}$, but keeping
the conformal class fixed. From the CMC theory of the 
conformal method, since $\tau\neq 0$, for $\sigma\in\calU$ there is a well defined conformal factor $\phi(\sigma)$ 
obtained by solving the Lichnerowicz equation
\begin{equation}
-a \Delta_{0} \phi + R_0\phi - |\sigma|^2_{g_0} \phi^{-q-1} + \kappa \tau^2 \phi^{q-1}=0,
\end{equation}
and
\begin{equation}
(g_\sigma, K_\sigma)=\left(\phi^{q-2}g,\phi^{-2}\sigma+\frac{\tau}{n}\phi^{q-2} g\right)
\end{equation}
is a solution of the constraint equations. For simplicity, we write $T_\sigma$ instead of $T_{g_\sigma}$

Consider, for $j = 0, \ldots, N$, the subsets $\calF_j = \{\sigma \in \calU: \mathrm{rank}\, T_\sigma \leq j\}$. 
We claim that since $\phi$, and hence $g$, depends real analytically on $\sigma$, each $\calF_j$ is an analytic subvariety 
of finite codimension in $\calU$. Indeed, $\sigma$ lies in $\calF_j$ if and only if the determinant of every 
$(j+1)$-by-$(j+1)$ minor of $T_\sigma$ vanishes, and this is a finite number of polynomial conditions. By analyticity 
again, if the set $\calF_j^o := \calF_j \setminus \calF_{j-1}$ of TT tensors $\sigma$ where the rank of $T_\sigma$ is exactly 
$j$ has an interior point, then it is an open dense subset in $\calU$. Furthermore, 
$\calU$ is the union of the sets $\calF_j$, hence some $\calF_k^o$ must have interior, and hence is open and dense. 
The main conclusion follows if we can show that $k = N$, since $T_\sigma$ has full rank implies that its nullspace is trivial. 

Suppose that this is not the case, so $\calF_k^o$ is open and dense in $\calU$ for some $k < N$. We first show that 
there exists a submanifold in $\calU$ with finite codimension such that the nullspace of $T_\sigma$ is 
equal to the {\it same} $k$-dimensional subspace for every $\sigma$ in the submanifold. Indeed, consider the map $G: \calF_k^o 
\to G(k,N)$ into the Grassmanian of $k$-planes in $\RR^N$, which sends $\sigma$ to the nullspace of 
$T_\sigma$.  Let $\mathcal R$ be the image of $\calU$ under $G$. By construction, $\mathcal R$ is a subanalytic set in
$G(k,N)$, and hence itself admits a stratification, $\mathcal R = \sqcup \mathcal R_j$ where each $\mathcal R_j$
is a smooth $j$-dimensional submanifold. Suppose that $J$ is the maximal dimension of these strata, and let
$\calU' = G^{-1}(\mathcal R_J)$. This is an open dense set in $\calU$. 

The point of these maneuvers is to obtain a map $G' = G|_{\calU'}$ with maximal rank and image in a smooth manifold.  
We may now apply some familiar tools of differential topology. By the Sard-Smale theorem, there exists a full measure set 
of regular values of $G'$, and hence we may choose a $k$-plane $\Pi \subset \RR^N$ such that $\hat Z := (G')^{-1}(\Pi)$ 
is a smooth analytic submanifold of finite codimension in $\calU'$. In particular, the nullspace
of $\div_{g_\sigma}$ is the same $k$-dimensional subspace $\Pi \subset \calQ_g$ for all $\sigma \in \hat Z$. 

Fix $\hat \sigma_1 \in \hat Z$ and and write $\phi_1$ and $g_1$ for the corresponding conformal
factor and metric. Set $Z=\phi^{-2}\hat Z$, so $Z\subseteq W^{k,p}(M,\Stt(g_1))$ is a submanifold with finite codimension, 
and $\sigma_1=\phi_1^{-2}\sigma\in Z$.  The Lichnerowicz equation with $g_1$ as background metric is then
\begin{equation}
-a \Delta_1 \phi + R_1\phi - |\sigma|^2_{g_1} \phi^{-q-1} + \kappa \tau^2 \phi^{q-1}=0.
\label{Lichr2}
\end{equation}
By solving \eqref{Lichr2} for $\phi$, each $\sigma\in Z$ determines a  metric $g_\sigma=\phi^{q-2}g_1$ and second fundamental
form $K_\sigma$ solving the constraint equations. 
Moreover, let $H$ denote the connected component of the identity in the isometry group of $(M, g_1)$. This is a compact, connected Lie
group of positive dimension, and the quotient $M/H$ is an orbifold of strictly smaller dimension than $M$. 
Each $g_\sigma$ 
with $\sigma \in Z$ is invariant under $H$, or equivalently, the conformal factor $\phi(\sigma)$ (where
$g_\sigma = \phi^{q-2}g_1$) is invariant under $H$. This follows since $T_e H = \calK_{g_1}$ is actually constant
as $\sigma$ varies in $Z$. We show now that this leads to a contradiction.  

Suppose that $\sigma(\epsilon)$ is a one-parameter family of TT tensors lying in $Z$ with $\sigma(0)=\sigma_1$
and set $\eta = \dot\sigma(0)$. Differentiating the Lichnerowicz equation with respect to $\epsilon$ gives
\begin{equation}\label{Lichprime}
L \dot \phi = 2 \langle \sigma_1, \eta \rangle_{g_1}
\end{equation}
where 
\[
L := -a \Delta_1 + R_1 + (q+1) |\sigma_1|^2_{g_1} + (q-1) \kappa\tau^2 
\]
is the Frechet derivative of the Lichnerowicz equation at $\phi=1$.  Next differentiate \eqref{Lichprime} 
with respect to $X \in \calK_1$ to obtain
\[
L X\phi' = -[X,L]\phi +  2 X \langle \sigma_1, \eta \rangle_{g_1}
\]
Setting $\sigma=\sigma_1$ in equation \eqref{Lichr2}, the solution is $\phi=1$  and hence $R_1 + \kappa \tau^2 = |\sigma_1|^2_{g_1}$. 
The left side of this last relation is annihilated by any $X \in \calK_1$, hence so is the right, so it follows that all the
coefficient functions of $L$ are annihilated by $X$, and in particular $[X,L]=0$. Hence
\[
L X\phi' = 2 X \langle \sigma_1, \eta \rangle_{g_1}.
\]
On the other hand, $X(\phi(\sigma))=0$ for all $\sigma\in Z$
and therefore $X\phi'=0$.  Since $R_1 = |\sigma_1|^2_{g_1}
- \kappa \tau^2$ we can rewrite
\[
L=-a \Delta_1 + (q+2) |\sigma_1|^2_{g_1}  + (q-2) \kappa\tau^2
\]
to see that $L$ is invertible, and we conclude
that the pointwise inner product $\langle \sigma_1, \eta \rangle_{g_1}$ is constant along the $H$-orbits 
for every $\eta$ in the finite codimensional subspace $T_{\sigma_1}Z \subset W^{k,p}(M,\Stt(g_1))$. 

We now show that this last conclusion is absurd.  To this end, we use a construction presented in a neat and general
form in \cite{Delay:2012}, but in fact in fact in this finite regularity setting also following from  \cite{Mazzeo-edge}.
Namely, we claim that there exist $\eta \in W^{k,p}(M,\Stt(g_1))$ with arbitrarily small support. The basic principle is that the
operator $\div_g$ is left-elliptic, and under a certain hypothesis can be shown to be surjective acting between
symmetric trace-free two-tensors and vector fields (or $1$-forms) which vanish to some high order at the boundary
of some domain $\calO$. (This is proved in \cite{Delay:2012} using a weight function which vanishes exponentially in the
distance to $\partial \calO$, but follows from \cite{Mazzeo-edge} if one is content with weight functions which vanish
at any polynomial rate.)  We show how to apply this principle: suppose that $\chi \in \calC^\infty_0$ equals $1$ on 
an open set $\calO'$ which has closure contained in $\calO$ and which vanishes outside $\calO$. Denote by 
$\Omega$ the annular domain $\calO \setminus \overline{\calO'}$. If $\xi \in W^{k,p}(M,\Stt(g_1))$ is arbitrary, then 
$\div_{g_1}(\chi \xi) = \iota(\nabla \chi) \xi \in W^{k-1,p}$ has compact support in $\overline{\Omega}$. 
By \cite{Delay:2012, Mazzeo-edge}, there exists a symmetric trace-free $W^{k,p}$ two-tensor $\gamma$ supported in 
$\overline{\Omega}$ with $\div_{g_1} \gamma = \div_{g_1}(\chi \xi)$ if and only if $\iota(\nabla \chi) \xi$ 
is $L^2$ orthogonal to every $Y \in \calQ_g$, i.e., $\int_M \xi( \nabla \chi, Y) \, dV_{g_1} = 0$. To show
that this is satisfied here, observe that since $Y$ is conformal Killing and $\xi$ is trace-free,
\[
\div_{g_1} (\chi\; \iota(Y)\xi) = - \nabla^a_{g_1}(\chi \xi_{ab}Y^b) = 
- \xi(\nabla \chi, Y) + \chi \xi^{ab} \frac{1}{n}(\delta_{g_1} Y) (g_1)_{ab} = - \xi(\nabla \chi, Y)
\]
Integrating over $M$ yields the desired orthogonality. Hence $\chi \xi - \gamma \in W^{k,p}(M,\Stt(g_1))$ agrees with $\xi$
in $\calU$ and has support in $\calO$. 

Now choose disjoint open sets 
$\calO'_j$, $j = 1, \ldots, \ell$ such that
\begin{itemize}
	\item $\ell$ is larger than the codimension of $Z$,
	\item $\sigma_1\neq 0$ throughout each $\calO_j'$ (this is possible since $\sigma_1\not\equiv 0$),
	\item no integral curve of $X$ is contained in $\calO_j'$.
\end{itemize}
We can then apply the above construction to 
$\eta=\sigma_1$ on each $\calO'_j$ to obtain 
localizations $\sigma_{1j}$. 
Since $\ell$ is larger than the codimension of $Z$ 
there is a nontrivial linear combination
\[
\eta = \sum_j  b_j \sigma_{1j} = 0 \mod T_{\sigma_1}Z.
\]
That is, $\eta\in T_{\sigma_1}Z$.  Picking some $j$ such 
that $b_j\neq 0$, there is an integral
curve of $X$ which contains a point in $\calO_j'$ where
$\eta=\sigma_1\neq 0$.  
But this same integral curve is not contained in $\calO_j'$
and hence also contains a point on $\partial O_j'$ where
$\eta=0$.  It is then 
obvious that $\langle \sigma, \eta \rangle_{g_1}$ is not constant along the integral curve. 

This is the contradiction we desired. The proof is complete. 
\end{proof}

\subsection{Rescaling CED-V Conformal Parameters}
In \cite{Gicquaud:2014bu}, the authors observe that the far-from CMC solutions of the constraints 
constructed in \cite{HNT07b} and \cite{M09} can be considered as perturbations of solutions with $\tau\equiv 0$,
together with rescaling. In this section we examine how these arguments translate to the CED-V setting.

Starting from a pair $(g,K)$, consider a length $L>0$ and a rescaled pair $(\hat g,\hat K)=(L^2 g, L\, K)$.  If 
$(g,K)$ solves the constraints with physical matter distribution $(\rho,j,\Lambda)$, then $(\hat g,\hat K)$
solves the constraints with physical matter distribution 
\begin{equation}
(\hat \rho,\hat j,\hat \Lambda) = (L^{-2}\rho, L^{-1} j_{a}, L^{-2}\Lambda ).
\end{equation}
A straightforward computation establishes how this homothety scaling extends to CED-V parameters.
\begin{lemma}\label{lem:scaling}
Suppose $(\phi,W,Q)$ is a solution of the CED-V equations \eqref{eq:driftV2} for conformal
data $(g,\sigma,\tau_*,V;\;N)$ and conformal matter distribution $(\rho(\cdot ), j ,\Lambda)$.
For any $L>0$, 
\begin{equation}
(L^{\frac{n}{2}-1}\phi, L^{n-1} W, L^{n-1} Q)
\end{equation}
is a solution of system \eqref{eq:driftV2}
for conformal data 
\begin{equation}
(g,  L^{n-1}\sigma, L^{-1} \tau_*, L^{n-1} V;\; N)
\end{equation}
and conformal matter distribution 
\begin{equation}
(L^{-2}\rho
(L^{1-\frac{n}{2}}\;\cdot), L^{n-1} j, L^{-2}\Lambda).
\end{equation}
\end{lemma}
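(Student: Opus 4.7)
The proof is a direct substitution followed by comparison of powers of $L$ on each side of every equation. Two elementary simplifications drive the computation: since $L>0$ is constant, the conformal Killing operator $\ck$ and the divergence $\div_{\hat\phi} = \hat\phi^{-q}\circ\div\circ\hat\phi^q$ are both unchanged when $\phi$ is replaced by $\hat\phi = L^{n/2-1}\phi$, because the $L^n$ factors cancel, and the same holds for $\div_\phi^*$. From $q = 2n/(n-2)$ I also record the identities $(n/2-1)(q-2)=2$, $(n/2-1)q=n$, $(n/2-1)(q\pm 1)=\tfrac{n\pm 2}{2}$, and $(n/2-1)(-q-1)=-\tfrac{3n-2}{2}$, which determine how every $\hat\phi$-power responds to the homothety.

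With these in hand I would check the Hamiltonian constraint of \eqref{eq:driftV2} term by term. The Laplacian and scalar-curvature terms manifestly scale as $L^{n/2-1}$. For the transverse-traceless kinetic term, $|\hat\sigma + (2N)^{-1}\ck\hat W|^2 = L^{2(n-1)}|\sigma + (2N)^{-1}\ck W|^2$ combines with $\hat\phi^{-q-1} = L^{-(3n-2)/2}\phi^{-q-1}$ to give the same $L^{n/2-1}$ scaling. For the mean-curvature term, the bracket $\hat\tau_* + (N\hat\phi^q)^{-1}\div_{\hat\phi}(\hat\phi^{2-q}\hat V + \hat Q)$ reduces to $L^{-1}$ times the original bracket; squaring and multiplying by $\hat\phi^{q-1} = L^{(n+2)/2}\phi^{q-1}$ again yields $L^{n/2-1}$. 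The matter side uses $\hat\Lambda = L^{-2}\Lambda$ and $\hat\rho(\hat\phi) = L^{-2}\rho(L^{1-n/2}\hat\phi) = L^{-2}\rho(\phi)$, which produce the matching $L^{n/2-1}$ after multiplication by $\hat\phi^{q-1}$.

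An analogous computation for the momentum constraint shows that each of its summands picks up the uniform factor $L^{n-1}$, balanced against $\hat j = L^{n-1}j$ on the right; and the CKF compatibility condition \eqref{eq:ckfcompat-driftV} similarly factors out $L^{n-1}$ from both sides using the same bracket computation. It remains to verify that $(\hat\rho(\cdot), \hat j, \hat\Lambda)$ still satisfies the energy scaling condition of Definition \ref{def:rhoscale}: its linearization retains the multiplicative form $D\hat\rho_\phi[\dot\phi] = \hat r\,\dot\phi$ with $\hat r\in W^{k-2,p}$, and the sign condition on $\phi\mapsto\phi^{q-2}\hat\rho(\phi) = L^{-2}\phi^{q-2}\rho(L^{1-n/2}\phi)$ is inherited from $\rho$ by the chain rule since $L^{1-n/2}>0$.

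The only real difficulty is meticulous bookkeeping of $L$-exponents; conceptually the lemma simply records the rigid way in which the homothety $(\overline g, \overline K)\mapsto (L^2 \overline g, L\overline K)$ of physical initial data lifts to the CED-V seed data and to the conformal factor/vector field/CKF triple that solves \eqref{eq:driftV2} via the reconstruction formulas \eqref{eq:CEDV-reconstruct}.
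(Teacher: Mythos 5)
Your overall strategy --- direct substitution and bookkeeping of powers of $L$ --- is exactly what the lemma calls for (the paper itself offers no proof beyond calling the result ``a straightforward computation''), and your treatment of the Laplacian, scalar curvature, transverse--traceless and matter terms is correct. But the one genuinely nontrivial step is asserted rather than checked, and it fails as you state it. You claim that the bracket $\hat\tau_*+(N\hat\phi^{q})^{-1}\div_{\hat\phi}(\hat\phi^{2-q}\hat V+\hat Q)$ is $L^{-1}$ times the original bracket. Using your own identity $(\tfrac n2-1)(q-2)=2$ one has $\hat\phi^{2-q}=L^{-2}\phi^{2-q}$, so with $\hat V=L^{n-1}V$ and $\hat Q=L^{n-1}Q$,
\begin{equation*}
\frac{1}{N\hat\phi^{q}}\div_{\hat\phi}\bigl(\hat\phi^{2-q}\hat V+\hat Q\bigr)
= \frac{L^{-n}}{N\phi^{q}}\div_{\phi}\bigl(L^{n-3}\phi^{2-q}V+L^{n-1}Q\bigr)
= L^{-3}\,\frac{1}{N\phi^{q}}\div_{\phi}(\phi^{2-q}V)+L^{-1}\,\frac{1}{N\phi^{q}}\div_{\phi}Q .
\end{equation*}
The $V$--contribution scales by $L^{-3}$, not $L^{-1}$, so the bracket does not scale uniformly and neither the Hamiltonian constraint, the momentum constraint, nor the compatibility condition \eqref{eq:ckfcompat-driftV} closes up.

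The computation balances only if one takes $\hat V=L^{n+1}V$: then $\hat\phi^{2-q}\hat V=L^{n-1}\phi^{2-q}V$ matches $\hat Q=L^{n-1}Q$, the bracket is $L^{-1}$ times the original, and every equation scales consistently. This is in fact the scaling the paper uses downstream: Corollary \ref{cor:anydrift} starts from drift data $L^{-1-n}V$ and lands on $V$, and the subsequent corollary sends $LV$ to $L^{n+2}V$, both of which require the factor $L^{n+1}$ (the factor $L^{n-1}$ is the scaling of the physical representative $\phi^{2-q}V$ at the solution metric, not of $V\in\mathcal D_g$). So your bookkeeping error coincides with --- and therefore fails to detect --- what appears to be a typo in the statement of Lemma \ref{lem:scaling}; carrying out the ``straightforward computation'' honestly would have flagged it. A minor further slip: your recorded identity $(\tfrac n2-1)(q\pm1)=\tfrac{n\pm2}{2}$ is garbled (the correct values are $(\tfrac n2-1)(q-1)=\tfrac{n+2}{2}$ and $(\tfrac n2-1)(q+1)=\tfrac{3n-2}{2}$), although you use the correct exponents where it matters.
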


Lemma \ref{lem:scaling} should be compared with the analogous result for the CTS-H equations, where a solution $(\phi,W)$ for 
conformal data $(\sigma,\tau;\;N)$ scales to a solution $(L^{\frac n 2 -1} \phi, L^{n-1} W)$ for conformal data $(L^{n-1}\sigma,L^{-1}\tau;\;N)$.  
So for the CTS-H equations, we can effectively trade small $\tau$ for large $\sigma$ or vice-versa.  
Furthermore, if a solution with $\tau\equiv 0$ can be found, then nearby perturbatios and rescalings 
allow for arbitrary mean curvature. The situation is more complicated for the CED-V equations because there is an additional
parameter involved, but the principle is the same.  If we can find a solution with a parameter equal to zero, then we may hope 
to perturb off of it and rescale to obtain any value of the chosen parameter.  In the CMC case, volumetric drift is zero, and
hence we can obtain any desired volumetric drift.

\begin{corollary}\label{cor:anydrift}
Let $L>0$ be a constant and consider drift conformal data $(g,L^{n-1}\sigma,L^{-1}\tau_*, V;\;N)$
with conformal matter distribution $(L^{-2}\rho(L^{\frac{n}{2}-1}\;\cdot),L^{n-1}j,L^{-2}\Lambda)$, all with the regularity 
hypotheses considered in Theorem \ref{thm:smalldrift-conf}.
There exists a solution of the CED-V equations \eqref{eq:driftV2} for this data if $L$ is sufficiently large and if
all of the following hold:
\begin{itemize}
\item There exists a solution for the the CMC conformal data $(g,\sigma,\tau_*,0;\;N)$ with 
matter distribution $(\rho(\cdot),j,\Lambda)$.
\item There are no true Killing fields for the metric 
at the CMC solution.
\item $\kappa \tau_*^2 \ge \Lambda$
\item Either $\kappa \tau_*^2 > \Lambda$, or $\sigma\not\equiv 0$, or the matter distribution is not vacuum.
\end{itemize}
\end{corollary}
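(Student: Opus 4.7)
The plan is to combine Theorem \ref{thm:smalldrift-conf} with the homothety scaling in Lemma \ref{lem:scaling}. The theorem gives existence in a neighborhood of $V \equiv 0$ around any suitable CMC solution, while the scaling allows us to trade a large fixed drift $V$ for a very small drift, at the price of rescaling the other prescribed parameters. The combination is what produces solutions for arbitrary drift, provided we are willing to take $L$ large.

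First, I would apply Theorem \ref{thm:smalldrift-conf} to the unrescaled CED-V data $(g,\sigma,\tau_*,0;\,N)$ with conformal matter distribution $(\rho(\cdot),j,\Lambda)$. Each hypothesis of that theorem is inherited directly from the corresponding hypothesis in the corollary: the CMC solution exists by assumption, admits no true Killing fields by assumption, and the conditions on $\kappa\tau_*^2$ versus $\Lambda$ together with the non-vacuum/non-trivial $\sigma$ alternative are exactly those listed. The theorem therefore produces some $\epsilon>0$ such that for every $\tilde V\in\mathcal D_g$ with $\|\tilde V\|_{W^{k,p}}<\epsilon$, the CED-V equations \eqref{eq:driftV2} for data $(g,\sigma,\tau_*,\tilde V;\,N)$ with matter $(\rho(\cdot),j,\Lambda)$ admit a solution $(\phi,W,Q)$.

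Second, I would pick $L$ large enough that
\[
L^{1-n}\,\|V\|_{W^{k,p}} < \epsilon,
\]
and set $\tilde V := L^{1-n}V$; this is still an element of the linear space $\mathcal D_g$. By the previous step, the CED-V equations with data $(g,\sigma,\tau_*,\tilde V;\,N)$ and matter $(\rho(\cdot),j,\Lambda)$ possess a solution $(\phi,W,Q)$. Third, apply Lemma \ref{lem:scaling} to this solution with scaling parameter $L$: the triple $(L^{n/2-1}\phi,\,L^{n-1}W,\,L^{n-1}Q)$ solves the CED-V equations for the rescaled data $(g,L^{n-1}\sigma,L^{-1}\tau_*,L^{n-1}\tilde V;\,N) = (g,L^{n-1}\sigma,L^{-1}\tau_*,V;\,N)$ together with the matter distribution obtained from $(\rho(\cdot),j,\Lambda)$ by Lemma \ref{lem:scaling}, which is precisely the matter distribution appearing in the statement of the corollary.

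There is no substantive obstacle here; the argument is a direct composition of the two tools. The only bookkeeping is to check that the small-drift hypothesis of Theorem \ref{thm:smalldrift-conf} is applied to the \emph{unrescaled} data (since the theorem's hypothesis ii)-iii) involves the unrescaled $\tau_*$ and $\Lambda$, which is consistent with the corollary's hypotheses), and to track the exponent in the drift scaling ($V\mapsto L^{n-1}V$) to see that the factor $L^{n-1}$ can be made arbitrarily large while absorbing $V$ into the $\epsilon$-neighborhood of $0$.
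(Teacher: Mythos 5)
Your proof is correct and follows essentially the same route as the paper: apply Theorem \ref{thm:smalldrift-conf} to the unrescaled data with the shrunken drift $L^{1-n}V$ (small once $L$ is large), then rescale via Lemma \ref{lem:scaling}. In fact your exponent $L^{1-n}$ is the right one to undo the $V\mapsto L^{n-1}V$ scaling, whereas the paper's proof writes $L^{-1-n}V$, which appears to be a typo.
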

\begin{proof}
Consider the rescaled conformal data $(g, \sigma, \tau_*, L^{-1-n} V;\; N)$
with conformal matter distribution $(\rho(\cdot), j, \Lambda)$.  From the stated assumptions
we can apply Theorem \ref{thm:smalldrift-conf} to conclude that if $L$ is sufficiently large
(and hence $L^{-1-n}V$ is sufficiently small) 
there exists a solution $(\phi, W, Q)$  of system \eqref{eq:driftV2}
for this data. Let
\begin{equation}
(\hat \phi,\hat W,\hat Q)=(L^{\frac{n}{2}-1} \phi, L^{n-1}  W, L^{n-1} \hat Q)
\end{equation}
Lemma \ref{lem:scaling} implies $(\hat \phi,\hat W,\hat Q)$ is a solution of system
\eqref{eq:driftV2} for conformal data $(g,L^{n-1}\sigma,L^{-1}\tau_*,V;\;N)$
with matter distribution $(L^{-2}\rho(L^{1-\frac{n}{2}}\cdot), L^{n-1}j, L^{-2}\Lambda)$.
\end{proof}

In effect, Corollary \ref{cor:anydrift} provides a weak notion of the idea that we can obtain any volumetric
drift we please so long as we take the conformal momentum sufficiently large and the volumetric momentum sufficiently small. For maximal CMC solutions ($\tau_*=0$) an analogous
procedure shows that we can perturb to
an arbitrary volumetric momentum at the penalty of shrinking
both the conformal momentum and the volumetric drift.

\begin{corollary}
Under the same regularity hypotheses as Theorem 
\ref{thm:smalldrift-conf} suppose:
\begin{itemize}
\item There exists a solution for the maximal slice conformal data $(g,\sigma,0,0;\;N)$ with 
matter distribution $(\rho(\cdot),j,\Lambda)$.
\item There are no true Killing fields for the metric 
at the CMC solution.
\item $\kappa \tau_*^2 \ge \Lambda$
\item Either $\kappa \tau_*^2 > \Lambda$, or $\sigma\not\equiv 0$, or the matter distribution is not vacuum.
\end{itemize}
If $L>0$ is sufficiently small, then there exists a solution of the CED-V equations \eqref{eq:driftV2}
with prescribed conformal data $(g,L^{n-1}\sigma,\tau_*,L^{n+2}V;\;N)$ and matter distribution 
$(L^{-2}\rho(L^{\frac{n}{2}-1}\;\cdot),L^{n-1}j,L^{-2}\Lambda)$.
 \end{corollary}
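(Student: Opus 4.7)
The approach parallels that of Corollary \ref{cor:anydrift}: rescale to reduce the problem to a small perturbation of the hypothesized CMC solution, apply Theorem \ref{thm:smalldrift-conf}, and then rescale back. The only novelty is that the perturbation now lies in the volumetric momentum $\tau_*$ rather than in the volumetric drift $V$, so the exponents of $L$ that are forced to shrink are different.

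First, I would identify the unscaled problem whose image under the homothety of Lemma \ref{lem:scaling} with dilation factor $L$ is the target. A short inspection of the scaling exponents shows that the unscaled data $(g,\sigma,L\tau_*,L^{3}V;\,N)$ with matter distribution $(\rho(\cdot),j,\Lambda)$ is carried by Lemma \ref{lem:scaling} exactly to the prescribed target data $(g,L^{n-1}\sigma,\tau_*,L^{n+2}V;\,N)$ together with the matter $(L^{-2}\rho(L^{n/2-1}\,\cdot),L^{n-1}j,L^{-2}\Lambda)$. Hence it is enough to produce a solution of the unscaled CED-V system for every sufficiently small $L>0$.

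Next, I would apply Theorem \ref{thm:smalldrift-conf} at the base point $(\hat\sigma,\hat\tau_*,\hat V)=(\sigma,0,0)$. By the first hypothesis of the corollary this maximal slice CMC data admits a solution; the theorem's hypothesis (i) is supplied by the second hypothesis of the corollary; and at the maximal base with $\hat\tau_*=0$, the theorem's hypotheses (ii) and (iii) reduce to conditions on $\Lambda$, $\sigma$, and the matter distribution that are provided by the third and fourth hypotheses of the corollary. Theorem \ref{thm:smalldrift-conf} then yields an $\epsilon>0$ such that the unscaled data is solvable whenever $|L\tau_*|+\|L^{3}V\|_{W^{k,p}}<\epsilon$, a regime reached for all sufficiently small $L$. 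This furnishes a solution $(\phi,W,Q)$ of the unscaled CED-V equations.

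Finally, applying Lemma \ref{lem:scaling} to $(\phi,W,Q)$ promotes it to the explicit solution $(L^{n/2-1}\phi,L^{n-1}W,L^{n-1}Q)$ of the target CED-V system, completing the argument. The principal point of care, rather than a serious obstacle, is simply bookkeeping of the scaling exponents: verifying that the pre-scaled matter distribution is indeed $(\rho(\cdot),j,\Lambda)$ so that the CMC existence hypothesis applies verbatim, and that the perturbation parameters $L\tau_*$ and $L^{3}V$ genuinely shrink as $L\to 0^+$. Beyond that, the proof is a direct adaptation of the one for Corollary \ref{cor:anydrift}, with small-$L$ playing the role that large-$L$ played there.
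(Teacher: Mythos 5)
Your proposal is correct and follows essentially the same route as the paper's proof: rescale down to data near the maximal-slice base point $(\sigma,0,0)$, invoke Theorem \ref{thm:smalldrift-conf} there, and push the resulting solution forward with Lemma \ref{lem:scaling}. In fact your bookkeeping is the more careful one --- starting from $(g,\sigma,L\tau_*,L^{3}V;\,N)$ is exactly what Lemma \ref{lem:scaling} requires to land on the stated target $(g,L^{n-1}\sigma,\tau_*,L^{n+2}V;\,N)$, whereas the paper's proof begins from $(g,\sigma,L\tau_*,LV;\,N)$, which would produce $L^{n}V$ instead.
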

\begin{proof}
Consider the rescaled conformal data $(g, \sigma, L \tau_*, L V;\; N)$
with matter distribution $(\rho(\cdot), j, \Lambda)$.  Since we have assumed 
that there exists a solution for the maximal slice data
$(g, \sigma, 0, 0;\; N)$, 
Theorem \ref{thm:smalldrift-conf} implies that if $L$ is sufficiently small
there exists a solution $(\phi, W, Q)$  of system \eqref{eq:driftV2}
for this data.  Rescaling as in the the proof of Corollary \ref{cor:anydrift},
we then find that that there exists a solution for 
conformal data $(g,L^{n-1}\sigma,L^{-1}\tau_*,L^{n+2}V;\;N)$
and matter distribution $(L^{-2}\rho(L^{1-\frac{n}{2}}\cdot), L^{n-1}j, L^{-2}\Lambda)$.
\end{proof}

\section{Extension to the AE and AH settings}
In this brief final section we indicate the modifications necessary to carry these results over to the two main 
noncompact settings common in this field, namely to sets of data which are asymptotically Euclidean (AE) or asymptotically 
hyperbolic (AH), respectively. 
(Extensions to other cases of interest, such as to compact manifolds with boundary, may be established by following the same overall approach.)

As is well known, in either of these cases, we may take advantage of known solvability results for the various linear
operators which appear in this paper, acting between appropriate weighted Sobolev spaces. Our intent here is not to be complete,
but rather just to briefly describe those parts of the arguments above that can be modified without further effort. In fact, 
there are no nontrivial conformal Killing fields vanishing at infinity in these settings, so the situation is somewhat simpler.  
On the other hand, this absence of conformal Killing fields implies both the standard conformal method and the drift method have 
perfectly adequate near-CMC theories for AE and AH initial data, and 
any potential advantages of the drift method is these cases would have
to arise for far-from CMC data. In the AH setting 
there are additional deeper questions concerning the `shear-free' condition
(see, e.g., \cite{AnderssonChrusciel1994}) but these have not been previously addressed even for the
standard conformal method and we leave their resolution for elsewhere.

\medskip

\noindent{\bf Asymptotically Euclidean Data:} 

We say that $(M, g, K)$ is an asymptotically Euclidean data set if there exists a compact region $K \subset M$
such that each of the finitely many components $E$ of $M \setminus K$ is diffeomorphic to $\RR^n \setminus B_R(0)$ 
for some $R > 0$, and using this diffeomorphism to give coordinates on each end, $g|_E = \delta + h$ where $\delta$ is the
Euclidean metric and $h_{ij} = \calO( |x|^{-1})$, along with corresponding estimates for the derivatives up 
to order $2 + \alpha$. At the same time, $K_{ij} = \calO( |x|^{-2})$ along with derivatives.  It is equally easy from
an analytic standpoint to include the somewhat more general case of asymptotically conic data. Here $M \setminus K$ 
is a finite union of ends $E$ where each $E$ is diffeomorphic to the `large end' of a Riemannian cone $C(Y)$,
with metric $dr^2 + r^2 k_Y$, where $(Y, k_Y)$ is a compact Riemannian manifold, and so that the corresponding 
estimates as above hold with this conic metric in place of the Euclidean metric. In either case, we also impose 
suitable decay conditions on matter fields.

The results that need to be modified in this new geometric setting are those which concern the global
solvability of certain elliptic problems.  The particular results that require different proofs are the York 
splitting Lemmas~\ref{lem:yorksplit} and \ref{lem:yorksplitvol}, and our main 
Theorem~\ref{thm:smalldrift-conf}.  In Theorem~\ref{thm:smalldrift-conf},
we decompose the conformal factor $\phi = 1 + u$, and 
because there are no conformal Killing fields vanishing at infinity
the map $F$ no longer involves the variable $Q$. Its linearization
from equation \eqref{eq:linearizecpct} becomes
\begin{equation}\label{eq:linearizecpctAE}
DF(\delta u, \delta W) = 
\begin{pmatrix}
-a\Lap + A & 
- 2\ip< \sigma+ \frac{1}{2N}\ck \hat W, \frac{1}{2N}\ck (\cdot) >   \\
0 & \frac{1}{2}\ck ^* \left(\frac{1}{2N} \ck (\cdot )\right) 
\end{pmatrix}
\begin{pmatrix} \delta u\\ \delta W \end{pmatrix}
\end{equation}
where, in vacuum, 
\begin{equation}\label{eq:A2}
A =  (q+2)\left|\sigma+\frac{1}{2N}\ck \hat W\right|^2 \ge 0.
\end{equation}

 For all of these adjustments we require the basic Fredholm properties of elliptic operators 
on asymptotically conic spaces, which appears, for example, in \cite{Mazzeo-edge} (and many other places). 
The main observation is that one needs to let such an operator act between spaces which are weighted by 
powers of $|x|$ at infinity. This theory is well-known, the elliptic operators involved in our application indeed invertible, and there are no unexpected issues.

\def\del{\partial}
\noindent{\bf Asymptotically Hyperbolic Data}

\nobreak
Another main setting in relativity is the asymptotically hyperbolic 
case; this generalizes the spacelike hyperboloid in Minkowski space, or equivalently, hyperbolic space. 
The natural generalization of this 
is the class of conformally compact asymptotically hyperbolic spaces. We say that $(M, g, K)$ is an asymptotically
hyperbolic data set if the following holds. First, $M$ is the interior of a smooth compact manifold with boundary
$\overline{M}$. The metric $g$ is of the form $\overline{g}/\rho^2$, where $\overline{g}$ is a metric smooth
and nondegenerate up to $\del \overline{M}$, and $\rho$ is a boundary defining function for the boundary which
satisfies $|\nabla^{\overline{g}} \rho|_{\overline{g}} = 1$ at $\rho = 0$. The tensor $K$ is again smooth up to
$\del \overline{M}$, and if we write $K = \sigma + (\tau/n) g$, then $\tau$ converges to a constant at $\rho = 0$.
It is straightforward to relax the regularity assumptions on the metric and second fundamental form. 

Here too there is a rich and well-developed analytic theory, again to be found in \cite{Mazzeo-edge} (parts of which again 
appear in many other places as well). We let the relevant operators act on function spaces which are weighted
by powers of $\rho$, or equivalently, by powers of $e^{-d}$, where $d$ is the Riemannian distance function on $M$,
e.g.\ distance to some fixed compact set in the interior.  We again observe that Laplace-type operators
are Fredholm when acting between weighted Sobolev spaces and that the three main results mentioned above
hold in this geometric setting as well.   The monograph \cite{Lee2006} 
works out the indicial roots for the relevant elliptic
operators in this setting; these indicial roots determine the precise ranges of weights on the function spaces.  



\section*{Acknowledgment}
This work was supported by NSF grant 1263544. 
The first author was also supported by NSF grants DMS-1262982 and DMS-1620366.
The third author was also supported by NSF grant DMS-1608223.

\bibliographystyle{amsalpha-abbrv}
\bibliography{smalldrift,smalldrift-DM}
\end{document}